\newtheorem{definition}{Definition}
\newtheorem{theorem}{Theorem}
\DeclareMathOperator*{\argmin}{arg\,min}
\let\vec\mathbf
\DeclarePairedDelimiter\floor{\lfloor}{\rfloor}
\DeclareMathOperator{\ngeo}{\Gamma} 
\DeclareMathOperator{\nchild}{children} 
\DeclareMathOperator{\nparent}{parent} 
\DeclareMathOperator{\nnode}{\gamma} 
\DeclareMathOperator{\nroot}{\nnode_0} 
\DeclareMathOperator{\leaves}{leaves} 
\begin{document}
\newgeometry{bottom=1 in} 


\begin{center}

  \title{The 2020 Census Disclosure Avoidance System TopDown Algorithm}
  \maketitle

  \vspace*{.2in}
  
  \begin{tabular}{cc}
    John M. Abowd,\upstairs{\affilone} 
    Robert Ashmead,\upstairs{\affilone}
    Ryan Cumings-Menon,\upstairs{\affilone}
    Simson Garfinkel,\upstairs{\affilfour}\\
    Micah Heineck,\upstairs{\affilsix}
    Christine Heiss,\upstairs{\affilsix}
    Robert Johns,\upstairs{\affilsix}
    Daniel Kifer,\upstairs{\affilone,\affilthree}\\
    Philip Leclerc,\upstairs{\affilone}
    Ashwin Machanavajjhala,\upstairs{\affilfive, \affilseven}
    Brett Moran,\upstairs{\affilone}
    William Sexton,\upstairs{\affilfour, \affilseven}\\
    Matthew Spence,\upstairs{\affilone}
    Pavel Zhuravlev\upstairs{\affilone}\\   
  \\[0.25ex]
   {\small \upstairs{\affilone} U.S. Census Bureau} \\
   {\small \upstairs{\affilfour} Formerly, U.S. Census Bureau} \\
   {\small \upstairs{\affilfive} Duke University} \\
   {\small \upstairs{\affilthree} Penn State University} \\
   {\small \upstairs{\affilsix} Knexus Research Corporation}\\
   {\small \upstairs{\affilseven} Tumult Labs}
  \end{tabular}
   \emails{
    \upstairs{*}The views expressed in this technical paper are those of the authors and not those of the U.S. Census Bureau or the U.S. Department of Homeland Security. Statistics reported in this paper have DRB clearance number CBDRB-FY-20-DSEP-001. This paper is forthcoming in the \textit{Harvard Data Science Review}.  Final pre-print April 7, 2022.}
  \vspace*{0.3in}
\begin{abstract}
    The Census TopDown Algorithm (TDA) is a disclosure avoidance system using differential privacy for privacy-loss accounting. 
The algorithm ingests the final, edited version of the 2020 Census data and the final tabulation geographic definitions. The algorithm then creates noisy versions of key queries on the data, referred to as measurements, using zero-Concentrated Differential Privacy. Another key aspect of the TDA are invariants, statistics that the Census Bureau has determined, as matter of policy, to exclude from the privacy-loss accounting. The TDA post-processes the measurements together with the invariants to produce a Microdata Detail File (MDF) that contains one record for each person and one record for each housing unit enumerated in the 2020 Census. The MDF is passed to the 2020 Census tabulation system to produce the 2020 Census Redistricting Data (P.L. 94-171) Summary File. This paper describes the mathematics and testing of the TDA for this purpose.
\end{abstract}
\end{center}

\vspace*{0.15in}
\hspace{10pt}
  \small	
  \textbf{\textit{Keywords: }} {Differential Privacy, 2020 Census, TopDown Algorithm, Redistricting data}
  
\restoregeometry

\section{Introduction}
\label{sec:intro}
Differential privacy \citep{DMNS06} (henceforth DP) is considered the gold standard in privacy-protected data publication---it allows organizations to collect and publish statistics about groups of people while protecting the confidentiality of their individual responses. Initially adopted by the U.S. Census Bureau in 2008 for the OnTheMap product \citep{ashwin08:map,onthemap}, it has since seen development by Google \citep{rappor,prochlo}, Apple \citep{applediffp},  Uber \citep{elasticsensitivity}, and Microsoft \citep{DingKY17}. This paper describes the implementation of the TopDown Algorithm, an algorithm developed within the differential privacy framework for the 2020 Census of Population and Housing in the United States \citep{abowd18kdd}.

The 2020 Census implementation will be among the largest deployments of differential privacy using the trusted-curator (centralized) model and, arguably, will have the highest stakes of any deployed formal privacy system, since decennial census data are used for apportionment, redistricting, allocation of funds, public policy, and research. The TopDown Algorithm (TDA) is the name given to the DP mechanisms, optimization algorithms, and associated post-processing that are used for research and production. TDA generates confidentiality-preserving person- and housing-unit level data called microdata detail files (MDF) with demographic and housing-unit information from the resident populations of the United States and Puerto Rico. We refer to the system that carries out the TDA as the 2020 Census Disclosure Avoidance System (DAS). The DAS is the collection of computer programs that implements statistical disclosure limitation for the census. It replaces the record-swapping system used from 1990 to 2010 for decennial censuses.

This paper provides an overview of the DAS and TDA, discusses the motivation for adopting DP as the privacy-loss accounting framework as compared to other disclosure avoidance frameworks, and summarizes the critical policy considerations that motivated specific aspects of the design of TDA. We focus on the implementation of TDA that the Census Bureau used to release the 2020 Census Redistricting Data (P.L. 94-171) Summary File (redistricting data, hereafter) \citep{CensusFederalRegistryRedistrictingDataFile}. We discuss the utility of the TDA, but leave for future work a discussion of how users can best analyze the released data and how the privacy semantics---the interpretation of DP---are modified in the presence of invariants.

The remaining sections of the paper are organized as follows. Section \ref{sec:whyDP} lays out the rationale for adopting DP as the disclosure avoidance framework for the 2020 Census. Section \ref{sec:policy} discusses the overarching policy and practical constraints that governed the implementation. Section \ref{sec:census_est} discusses the many sources of uncertainty that are inherent in producing census population data. Section \ref{sec:redistricting} provides the schema, invariants (tabulations that are not processed by the differentially private mechanisms) and constraints for the specification of the redistricting data, the first data product  released using the methods described in this paper. Section \ref{sec:mechanism:overview} provides an overview of the DP mechanism implemented in the TDA as well as the relevant mathematical properties. Section \ref{sec:estimation} goes into detail about estimation routines and improvements to the algorithm. Section \ref{sec:utility_experiments} describes the tuning and testing of the DAS over time. Section \ref{sec:utility_experiments2} describes a set of experiments carried out to study the effect of certain design choices on the utility of the results using the 2010 redistricting data. Section \ref{sec:conclusion} concludes.

\section{Differential Privacy: What is it and why use it?}
\label{sec:whyDP}
The decennial census data are used to apportion the House of Representatives, to allocate at least 675 billion dollars of federal funds every year,\footnote{This is the latest estimate produced by the Census Bureau \citep{hotchkiss:phelan:2017}. Independent researchers place the figure at 1.5 trillion dollars \citep{reamer:2020}. The effect of the decennial census data on funding occurs primarily through its effect on the annual Population Estimates Program.} and to redistrict every legislative body in the nation.\footnote{From the text of Public Law 94-171 ``To amend section 141 of title 13, United States Code, to provide for the transmittal to each of the several States of the tabulation of population of that State obtained in each decennial census and desired for the apportionment or districting of the legislative body or bodies of that State.'' \citep{pl94:law}}
The accuracy of those data is extremely important.
The Census Bureau is also tasked with protecting the confidentiality of the respondents and the data they provide.
This dual mandate presents a challenging problem because data accuracy and data privacy are competing objectives \citep{abowdschmutte2019}.
For any given privacy-protection procedure, more accuracy means less privacy, and the theory underlying differential privacy and related research have helped to quantify that tradeoff. We use the term \emph{formally private} to encompass frameworks that include the many variants of differential privacy. Pure differential privacy, approximate differential privacy, and concentrated differential privacy are all examples of formally private methodologies. A disclosure avoidance framework is formally private if it is constructed from randomized mechanisms whose properties do not depend on the realized confidential data nor, ideally, on limitations on the attacker’s information set. A formally private framework must provide an accounting system that quantifies the privacy loss associated with the ensemble of published query answers. TDA uses the formal privacy framework defined by zero-Concentrated Differential Privacy which will be defined below.



There are several reasons why the Census Bureau introduced formally private methodology for disclosure avoidance.
First, the vulnerabilities of traditional disclosure limitation methods are well known among privacy researchers and legacy systems have been attacked in various ways \citep{Dinur:Nissim:2003, dobra2000bounds,barth2012re, sweeney2002k, homer:etal:2008, narayanan2008robust, hansell2006aol, cohen2018linear, kifer2009attacks, wong2007minimality, garfinkel2015identification, fienberg2005preserving, dwork:etal:2017}.
Access to petabyte-scale cloud-based computing resources and software libraries designed to use these resources has increased enormously. At the same time, the amount of commercially available or independently held data on individuals that could be used as auxiliary information in order to re-identify individuals in Census Bureau products has also exploded. Unlike the Census Bureau, which has operated under a strict data confidentiality statute since 1954 \citep{title13}, Internet-based data aggregators like Apple, Facebook, Google, LinkedIn, Microsoft, Uber, Twitter, and many others only recently became subject to tight privacy regulation in the form of the California Consumer Privacy Act \citep{calprivacyact} and the European Union General Data Protection Regulation \citep{EUgeneralDataProtection}. The presence of vast arrays of personal information held by private companies and state actors in combination with software like Gurobi, CPLEX and GLPK designed to solve many 
NP-hard systems of billions of simultaneous equations finally realized the scenario that national statistical agencies have known was a vulnerability since Ivan Fellegi's original paper on data confidentiality \citep{fellegi:1972}.

Formal methods like differential privacy were developed specifically to avoid the vulnerabilities of traditional disclosure limitation methods. 
Furthermore, in contrast to the previous disclosure avoidance methods used for decennial censuses, the exact methodology and parameters of the randomized mechanisms that implement differential privacy are \emph{transparent}, meaning an organization can release the source code and parameters of the mechanisms, but not the actual random numbers used, without compromising the privacy guarantees \citep{DMNS06}.

In addition to transparency, two other key properties make DP methods attractive compared with traditional disclosure avoidance methods. These are the absence of degradation under post-processing and adaptive composition. 
First, \emph{post-processing} \citep{DMNS06, pinq, Dwork:2014:AFD} means that if we run an algorithm ${A}$ with no direct access to the confidential data $X$ on the output of the differentially private mechanism ${M}(X)$, then the composed algorithm ${A(M}(X))$ also satisfies differential privacy with no additional privacy loss. Second, differential privacy has an \emph{adaptive composition} property. If we use a DP mechanism $M_1(X)$ to produce an output $\omega$ and a second DP mechanism $M_2(\omega, X)$, then the composed output is also differentially private and the total privacy-loss of the composed mechanism is a subadditive and smoothly bounded function of the privacy loss of $M_1$ and $M_2$ \citep{Dwork:2014:AFD, murtagh:vadhan:10.1007/978-3-662-49096-9_7}. 

To the best of our knowledge, no traditional disclosure avoidance method satisfies transparency, non-degradation under post-processing, and composition \citep{abowd:schmutte:2015}. 
In particular, the household record-swapping method used in the 1990, 2000 and 2010 Censuses \citep{mckenna:2018} is not transparent and degrades poorly under composition. Its parameters and the details of the swapping algorithm cannot be published without compromising the privacy guarantee.
Aggregation, which was combined with swapping in previous censuses, does not preserve uncertainty about the underlying data under post-processing because it allows re-identification via reconstruction of the microdata \citep{Dinur:Nissim:2003, JASON:2020}. 
DP mechanisms also degrade under composition but in a predictable and smoothly bounded manner, thus avoiding highly accurate database reconstructions if agencies reasonably control their expended privacy-loss budget. DP mechanisms do not degrade under post-processing. 
For the traditional disclosure limitation method of cell suppression \citep{cox:1980,cox1987research}, an arbitrarily large part of the suppression pattern can be unraveled if an attacker knows the value of a single cell that the system assumed was not known, which can occur either as prior knowledge or because some other group did a cell-suppressed release of overlapping data but didn't suppress the same cells. Cell suppression has formal guarantees, but they only hold against an attacker who knows exactly what was published and nothing else relevant. Suppression breaks down in a brittle, discontinuous way when the attacker knows more. Such knowledge may not exist when the data are published, but may enter the public domain at a later point in time. In contrast, formal privacy provides guarantees against general rather than specific classes of attackers. While the protection of formally private systems also degrades with uncoordinated independent data releases, this degradation is smoothly bounded and fully quantified by the cumulative global privacy-loss budget.

An important aspect of formal privacy is the acknowledgment that the attacker's knowledge plays a large role in disclosure avoidance and that assumptions about this knowledge are inherently imperfect. Thus, the definitions used in formal privacy systems attempt to limit the extent to which the protections depend upon the specifics of the attacker's prior information. These limitations usually take the form of worst-case analysis and in that sense can be viewed as a minimax approach to a class of privacy protection problems \citep{wasserman:zhou:2010}. 

Rather than attempting to make statements about the absolute privacy risk, which requires specifying the attacker's information set and strategy, DP shifts the focus to relative privacy guarantees, which hold for arbitrary attacker prior beliefs and strategies. For example, rather than saying Alice's data are not re-identifiable given the output of the DP mechanism, we say that Alice's incremental risk of re-identification is measured by the difference between the output of the DP mechanism that did not include Alice's data as input versus the same one that did. An attacker or user should be able to make a statistical inference about Alice---one that depends on aggregate data---but not an identifying inference---one that is enabled only by the presence of Alice's information as input to the DP mechanism. The mathematical analysis that quantifies these limitations on identifying inferences is called \emph{privacy semantics} \citep{kasiviswanathan2014semantics, balle2020hypothesis, dong2021gaussian, kairouz2015composition, wasserman:zhou:2010}.

\subsection{Basic definitions} \label{subsec:basic_defs}
There are multiple definitions of formal privacy, each having slightly different motivation and mathematical primitives. We use two specific formalizations as part of the TDA. The primary definition is that of approximate differential privacy or $(\epsilon,\delta)$-differential privacy. The parameter $\epsilon$ is commonly referred to as ``the privacy-loss parameter'' for this definition. The parameter $\delta$ represents the amount of departure from pure differential privacy and is typically set at values that are smaller than $1/N$, where $N$ is the number of records in the dataset, because a mechanism that releases a record at random satisfies $(0, 1/N)$ differential privacy.  The second definition is called zero-Concentrated Differential Privacy (zCDP), which implies $(\epsilon,\delta)$-differential privacy and has very robust privacy-loss accounting tools and composition properties. Concentrated differential privacy was introduced by \citet{DBLP:journals/corr/DworkR16} to provide tighter analysis of privacy loss due to composition. We implement the variant defined by \citet{10.1007/978-3-662-53641-4_24,bun2016concentrated}, which can be used to provide a complete privacy-loss accounting framework using the discrete Gaussian distribution for noise addition \citep{NEURIPS2020_b53b3a3d,canonne2020discrete}. 

\begin{definition}\label{def:dp}
    (Pure/Approximate Differential Privacy) Given $\epsilon \geq 0$ and $\delta \in [0,1]$, a randomized mechanism ${M}: \mathcal{X}^{n} \rightarrow \mathcal{Y}$ satisfies $(\boldsymbol\epsilon,\boldsymbol\delta)$\textbf{--differential privacy} if for all $x, x' \in \mathcal{X}^n$ differing on a single entry (denoted $x \sim x'$ hereafter),\footnote{In the literature, datasets that differ on a single entry are called \emph{neighboring} datasets. There are two different versions of neighboring. (1) \emph{unbounded}: two datasets are considered neighbors if one can be obtained from the other by adding or subtracting a single record; (2) \emph{bounded}: two data sets are considered neighbors if one can be obtained the from other by modifying a single record. In this paper we use bounded neighbors because the total population of the U.S. (and of Puerto Rico) is invariant therefore neighboring datasets must have the same total record count.} and all measurable events $E \subset \mathcal{Y}$, we have that $P (M(x)\in E) \le e^{\epsilon}P(M(x')\in E)) + \delta$ \citep{dwork2006our}. \textbf{Pure differential privacy} occurs for the case $\delta=0$ \citep{DMNS06}.
\end{definition}

\begin{definition}\label{def:zcdp}
    (zero-Concentrated Differential Privacy (zCDP)) Given $\rho \geq 0$, a randomized mechanism $M : \mathcal{X}^n \rightarrow \mathcal{Y}$ satisfies \textbf{$\boldsymbol\rho$--zero-concentrated differential privacy} ($\rho$--zCDP) if  for all $x \sim x' \in \mathcal{X}^n$, and all $\alpha \in (1,\infty)$:
    $$
    D_\alpha(M(x)||M(x')) \le \rho \alpha,
    $$
    where $D_\alpha(P||Q) = \frac{1}{\alpha-1} \log \left( \sum_{E\in \mathcal{Y}} P(E)^\alpha Q(E)^{(1-\alpha)} \right)$ is the Renyi divergence of order $\alpha$ of the distribution $P$ from the distribution $Q$  \citep{10.1007/978-3-662-53641-4_24, bun2016concentrated}.
\end{definition}

The reason for specifying two privacy definitions is that, during the course of implementing the TDA, we observed substantial accuracy improvements from adopting a discrete Gaussian mechanism \citep{canonne2020discrete} for the generation of noisy measurements versus the geometric mechanism \citep{ghosh2012universally}. While we quantify our privacy semantics using approximate differential privacy, privacy-loss accounting using the discrete Gaussian mechanism more naturally aligns with zCDP---allocating $\rho$ rather than $\epsilon$. Therefore, we adopted the zCDP privacy-loss accounting for TDA. To permit comparison to other DP implementations and because reasoning about the complete distribution of the privacy-loss random variable rather than its tail area is more difficult to communicate to policymakers, we translate the zCDP privacy parameter $\rho$ into the corresponding $(\epsilon, \delta)$ values for approximate differential privacy. When converting $\rho$ to $(\epsilon, \delta)$, we intentionally did not use the tightest conversion \citep{NEURIPS2020_b53b3a3d,canonne2020discrete}, but instead chose to use a conversion \citep{10.1007/978-3-662-53641-4_24} that overestimates $\delta$ and is interpretable as a upper bound on the probability that a particular level of $\epsilon$ does not hold \citep{asoodeh:etal:2020:9b52f4c81b6f4c50a766fe9675b81066}.



Rather than determining a single $(\epsilon, \delta)$ pair, zCDP produces a continuum of $(\epsilon, \delta)$ pairs each corresponding to a particular value of $\rho$, and all satisfying Definition \ref{def:dp}. The privacy-loss accounting is done using $\rho$. See Figure \ref{fig:curvepy} for an examples of the $(\epsilon, \delta)$ curve based on a zCDP value of $\rho=1$ and $\rho=2.63$ (the final production setting of TDA for the redistricting data). When $\delta = 0$, Definition \ref{def:dp} is that of pure differential privacy \citep{DMNS06}. The curve in Figure \ref{fig:curvepy} is based on the privacy-loss random variable and indicates that zCDP provides privacy protection that weakens as $\delta$ approaches 0, but never results in catastrophic failure, which some approximate differential privacy mechanisms do permit.\footnote{Catastrophic failure means one or more records is released with probability $min(n\delta,1)$, where $n$ is the number of records in the dataset.} 
No single point on Figure \ref{fig:curvepy} summarizes the value of rho completely. In public presentations of the work, the Census Bureau has provided the $(\epsilon, \delta)$ pairs associated with $\delta=10^{-10}$ to facilitate the transition from $(\epsilon, \delta)$ privacy-loss accounting to rho-based privacy-loss accounting. However, a privacy analysis of only this $(\epsilon, \delta)$ pair would lead to a very incomplete analysis of the privacy protections provided by a given setting of $\rho$. We do not attempt to further interpret the $\delta$ parameter in this work.

\begin{figure}[h]
\caption{Example of the $\epsilon$-$\delta$ curve for $\rho=1$ and $\rho=2.63$ }
\centering
\includegraphics[width=0.9\textwidth]{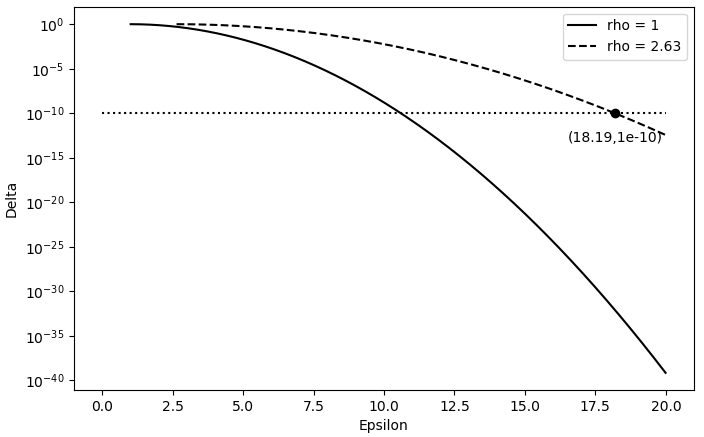}
\label{fig:curvepy}
\end{figure}

Definition \ref{def:dp} describes a property of the output distribution of DP mechanism ${M}$. It characterizes a guarantee that similar input databases will produce approximately the same output distribution. That is, an adversary cannot learn much more about the specific data of an arbitrary individual in the database  from the mechanism's output than if the record for that individual had been removed and replaced with an arbitrary default set of values \citep{kasiviswanathan2014semantics}. 
The translation to confidentiality protection comes from the argument that, if a person's record is first changed to a default set of values, then their confidentiality is protected because their data were never used in any published statistic. The degree to which the DP mechanism's output distribution is close to this private counterfactual depends on the privacy-loss parameter $\rho$ for zCDP or $\epsilon$ for $(\epsilon,\delta)$-DP. The parameter $\delta$ is the amount by which the pure DP guarantees are violated. For zCDP, $\rho$ provides a continuum of $(\epsilon, \delta)$ pairs with these interpretations. DP mechanisms that implement zCDP can be tuned to balance the data accuracy and privacy loss of a particular application. In the case of the 2020 Census and other government statistical applications, setting these values is (or should be) the job of policymakers. TDA was designed to permit policymakers, in particular the Census Bureau's Data Stewardship Executive Policy Committee (DSEP), to set the global privacy-loss parameter and its allocation to each query. The typical goal is to achieve the maximum accuracy for a given privacy-loss parameter ($\rho$ or $\epsilon$). However, DSEP's instructions generally tried to minimize the privacy-loss parameter given a lower bound on utility (see Section \ref{sec:mechanism:overview}).

\subsection{Data curator models}
The \emph{data curator} is the organization that collects and retains the data to be protected, often in a formal database. The 2020 Census application of DP uses the \emph{trusted-curator} model, sometimes called the centralized DP model, meaning that a legally authorized entity (the U.S. Census Bureau) collects and stores the confidential data and is responsible for the release of tabulations (queries) based on those data applying appropriate confidentiality protection techniques on the tabulations. This is in comparison with the \emph{untrusted curator} model, or local DP model, in which confidentiality protection techniques are applied to individual data records as they are collected and thus the curator never observes the confidential data directly, only through the output of the local DP mechanism. We point out the differences between these two models because the DP methods associated with them are necessarily different. An early example of a privacy-protection method in the untrusted curator model is randomized response \citep{warner1965randomized} in which the respondent answers a question truthfully with some predetermined probability, but otherwise supplies a fixed answer (or answers a different question). The advantage of the trusted curator model is that it allows for DP methods that are more statistically efficient, meaning that the results will be more accurate given the same privacy-loss parameter.  

\section{Policy and Practical Constraints}
\label{sec:policy}

In designing disclosure avoidance methods for the 2020 Census, several practical and policy considerations played a large role. Decennial census products are generally tables with counts of individuals, households, group quarters residents, or housing units with certain characteristics. The tabulation systems in place for the 2020 Census assume that the input is an individual or housing-unit record-level dataset in which each row represents a person or a particular housing unit. This type of dataset is often called \emph{microdata}. In order to align with the tabulation system, the design specifications for TDA mandated the production of microdata---a record-level image for each person (or housing unit) that, when tabulated, produces the protected query answers (not their confidential counterparts).

Producing privacy-protected microdata is not a common requirement for disclosure avoidance systems implemented using the centralized DP model, which are often designed to produce an unbiased noisy measurement of the answer to the original confidential query according to the use case for the queries. 
An important implication of the microdata requirement is that it forces the final output of TDA to be non-negative and integral because there is no accommodation  of a negative weight or fractional person in the Census Bureau's microdata tabulation systems.\footnote{Positive tabulation weights greater than unity are typically used in sample surveys, as opposed to full-enumeration censuses, with weights related to the inverse probability of selection into the sample. These tabulation systems could be modified to use negative weights, but a different statistical model for the meaning of the weight would be required.} 
From a practical point of view, this makes complete sense.
Some might find it strange if the Census Bureau reported that there were $-1.3$ persons with characteristic $A$ in geography $B$. Any disclosure avoidance system that allows for changing counts that are zero to non-zero values and enforces non-negativity, including suppression and swapping, suffers from systematic bias \citep{abowd:schmutte:2015}. 

There are two ways that the microdata requirement could be met by the disclosure avoidance system. The differentially private mechanism could directly output data consistent with the microdata requirement. Alternatively, the output of the differentially private mechanism could be post-processed to conform to the microdata requirement. The first approach could theoretically be achieved by using the exponential mechanism \citep{mcsherry2007mechanism}; however, that would require the mechanism to use the fully specified output range schema because the exponential mechanism relies on sampling from that range. For our application, that approach was infeasible because it required enumerating the discrete elements of the output space that simultaneously met the microdata, invariant, edit constraint, and structural zero requirements. Not only would this have required a new theoretical apparatus to solve, but during the course of the development of the DAS the requirements changed several times. Instead, the DAS uses post-processing to impose these requirements hierarchically using a preexisting framework from the operations research literature, specifically total unimodularity \citep[Sections III.1 and III.2]{wolsey:1999}.

There are some noticeable downstream consequences of imposing non-negativity. If the confidential count of characteristic $A$ in geography $B$ is $0$ and the TDA output of that count must be non-negative, then the microdata-based estimator, constrained to use non-negative weights, will be positively biased. This applies to both the post-processing case and the exponential mechanism case above, but note that when using the discrete Gaussian mechanism, the noisy measurements themselves are unbiased. In addition, the tabulation from microdata makes the errors in the disclosure avoidance system data dependent. 
Controlling the algorithmic biases and the data-dependent errors---properties due entirely to post-processing and not the DP mechanisms---is the hardest implementation challenge for TDA \citep{anon:2021}. Inferential methods using a posterior sampling framework and the noisy measurements directly rather than the post-processed tabulations have shown promising results, though there are computational limitations given the scope of the 2020 Census \citep{seeman2020private}.

A second, but related, requirement for the 2020 Census implementation was internal consistency, which is a natural property of tabulating all statistics from the same record-level input data with constant, non-negative weights (all unity for the 2020 Census).
Internal consistency means that a count of interest will be exactly the same if it appears in multiple tables or can be calculated from taking the sum or difference of counts from other tables. Consistency is a consequence of the algorithm producing privacy-protected microdata, but it is possible to have consistency without requiring tabulation from microdata.

A third requirement for the TopDown Algorithm was the implementation of invariants, edit constraints, and structural zeros.
By \emph{invariants} we mean counts that deliberately bypass the DP mechanism and are reported with certainty exactly as they are tabulated from the confidential data.
The most notable invariants for the 2020 Census are the state total resident populations because of their essential function in the apportionment of the House of Representatives. 
\emph{Edit constraints and structural zeros} are rules determined by Census Bureau subject matter experts, then applied to the confidential data as part of the processing that produces the Census Edited File, one of the two inputs to TDA. An example of an edit constraint is the requirement that a mother be at least some number of years older than her oldest natural child.

The privacy loss associated with the invariants is not quantifiable in the standard DP literature, and therefore the inclusion of invariants in the DAS complicates the interpretation of the privacy guarantees. Instead of going into detail here, we save this topic for a separate paper and note that the privacy-loss accounting for the DAS presented in this paper ignores any contributions from the invariants.    

One of the most important and challenging design considerations for the 2020 Census DAS taking into account the multidimensional nature of the utility of the census data. While apportionment is perhaps the most important use case, census data are used for hundreds if not thousands of public, private, and commercial applications ranging from redistricting, to population age-distribution estimation, to characteristics of the housing stock. Therefore, the TDA must produce output that satisfies the same edit constraints, where feasible, with controllable accuracy, in the sense of data utility, for all of its use cases. From an economic point of view, algorithmic design and implementation for TDA can be seen as allocating a scarce resource (the finite information content of the confidential data) between competing use cases (different accuracy measures) and respondent privacy loss \citep{abowdschmutte2019}.

Two excellent examples of competing use cases are redistricting and intercensal population estimates. Redistricting consists of drawing contiguous geographic entities that are mutually exclusive, exhaustive, and divide a political entity into voting districts of approximately equal populations. These districts also must satisfy the anti-discrimination provisions in Section 2 of the 1965 Voting Rights Act. To satisfy these requirements, nonpartisan redistricting experts work with the Census Redistricting and Voting Rights Data Office (part of the Census Bureau) to designate the smallest unit of geography---the atom from which the districts are assembled---and the taxonomy for race and ethnicity categories, which must comply with Office of Management and Budget Statistical Policy Directive 15 \citep{spd15}. Population estimates require annually updated populations for approximately 85,000 politically and statistically defined geographic areas and age by sex by race by ethnicity tables for all counties in the US and municipios in  Puerto Rico. Accuracy for redistricting targets geographic areas that are not yet defined when the disclosure avoidance is applied for statistics on population in 252 categories (voting age by race by ethnicity). Accuracy for population estimates targets one statistic (total population) in predefined geographies ranging in size from a few hundred to millions of people and about 2,000 statistics (sex by age in single years by major race and ethnicity categories) in counties and \emph{municipios}. Accuracy for redistricting is improved primarily by allocating privacy-loss to the lowest levels of geography (block groups and blocks in census terminology) and the most detailed race and ethnicity summaries. Accuracy for population estimates is improved by allocating privacy-loss to more aggregated geographies (counties and tracts) and favoring detail on age over detail on race and ethnicity.

\section{Sources of Uncertainty in the Census Estimation Problem}
\label{sec:census_est}

The use of DP mechanisms for the 2020 Census DAS will not be the first time uncertainty is deliberately introduced into the tabular summaries from a decennial census to enhance disclosure avoidance. And in any census, disclosure avoidance uncertainty is not the only source of uncertainty affecting the published statistics. Historically, the decennial census has used data suppression, blank and replace, swapping, and partial synthetic data for disclosure avoidance in the tabular summaries \citep{mckenna:2018}. In addition, sampling and coarsening along with swapping and partially synthetic data were used for public-use microdata samples \citep{mckenna:2019}. These methods also create uncertainty in inferences using the published data. Suppression creates a nonignorable missing data problem because the suppression rule depends on the size of the statistic being suppressed and on the correlation of the complementary suppressions with primary suppressions. Swapping and partially synthetic data affect inferences because of the randomness used by both procedures \citep{abowd:schmutte:2015}. However, the Census Bureau, like most national statistical agencies, does not provide any quantification of the uncertainty due to suppression, swapping or partial synthesis. Consequently, the effect on inferences from the published data is usually not acknowledged because researchers are unable to quantify it. For example, in the case of traditional swapping, the parameters of the swapping algorithm are secret. In the case of disclosure avoidance systems implementing DP mechanisms, the parameters and distributions of the mechanism's randomness are public and therefore can be quantified. TDA is an inference-valid disclosure avoidance system. Its uncertainty can be quantified and that quantification can be used to guide inferences from the published data; though, this can often be non-trivial.

The Census-taking process itself can be viewed as a random event. Modern incomplete data inference has formalized the many reasons why a realized census does not produce perfect, complete data. Post-enumeration surveys have shown the under-coverage of persons with some characteristics and the over-coverage of persons with other characteristics \citep{hogan:et:al:2013}. Additionally, observed data items can be missing or erroneous for some individuals. Given the observed data, missing data techniques are used to produce the Census Edited File (CEF)---the final version of the confidential census data. Statistically, the CEF is the completed census data from a collection of single-imputation missing data models.\footnote{We use the term ``completed data'' in the sense of \citet{rubin1974,rubin1976} to mean a data set with no missing data because all missing values have been imputed.} In the CEF, there is a record for every housing unit, occupied group quarters resident, and housing unit resident person alive on April 1, 2020 with all tabulation variables containing valid responses on every record. There is an enormous investment in the technology of specifying this ancillary estimator for the completed census data, given the observed census data, even when there is no adjustment for differential net under-coverage. Finally, disclosure avoidance techniques are applied to these completed census data to produce tabular summaries that protect confidentiality. In this paper, we use the CEF as the standard for estimating the accuracy of the DAS output. For better understanding the uncertainty introduced by disclosure avoidance, we compare the DAS output to small-scale simulations that quantify the uncertainty from census operational, measurement and coverage errors for the 2010 Census \citep{bell:schafer:2022}.

\section{Redistricting Data (P.L. 94-171) Summary File}
\label{sec:redistricting}

The Redistricting Data (P.L. 94-171) Summary Files are an essential data product as they are basis for state redistricting, official population counts, and other statutory uses. The tables are therefore sometimes called the ``redistricting data'' tables. Before going into additional details, it is important to have a sense of the geographic hierarchy that the Census Bureau uses in the creation of data products.\footnote{See https://www2.census.gov/geo/pdfs/reference/geodiagram.pdf for additional details.} The primitive element of the tabulation geographic hierarchy is the census block. All other geographic entities (states, tribal census areas, counties, places, etc.) can be defined as aggregations of blocks. The ``spine'' of the tabulation geographic hierarchy consists of the U.S., states (including the District of Columbia), counties, tracts, block groups, and blocks.\footnote{For the Commonwealth of Puerto Rico, the hierarchy is P.R., municipios, tracts, block groups, and blocks.} In each case, the entities at lower levels of geography on the spine aggregate without partitions to entities above them. That is, on the tabulation geographic spine a block belongs in one, and only one, block group; a block group belongs in one, and only one, tract; and so on up the spine. This tabulation geographic spine plays an important role in the TopDown Algorithm. Other geographic aggregations (e.g., Congressional districts, ZIP Code tabulation areas, incorporated places, etc.) are considered ``off-spine'' because they don't fit directly onto the tabulation spine's hierarchy.           
\subsection{Schema}
The  redistricting data tables for 2020 \citep{CensusDataProductsCrosswalk} are
\begin{enumerate}
    \item P1: Race
    \item P2: Hispanic or Latino, and not Hispanic or Latino by race
    \item P3: Race for the population 18 years and over
    \item P4: Hispanic or Latino, and not Hispanic or Latino by race for the population 18 years and over
    \item P5: Group Quarters (GQ) population by major GQ type
    \item H1: Occupancy status.
\end{enumerate}
Each table is produced down to the census block level. In order to tabulate from microdata, it is necessary to consider only a subset of census attributes with mutually exclusive levels. We designate these as the detailed queries for the person and housing-unit data, respectively. For person data the following attributes and levels constitute the detailed query to create the redistricting data tables:

\begin{enumerate}
    \item Block. 5,892,698 levels\footnote{This is the number of blocks in 2020 with at least one housing unit or at least one occupied GQ.}  
    \item Household or Group Quarters Type.  8 levels: household, correctional facilities for adults, juvenile facilities, nursing facilities/skilled-nursing facilities, other institutional facilities, college/university student housing, military quarters, other noninstitutional facilities;
    \item Voting Age. 2 levels: age 17 or younger on April 1, 2020, age 18 or older on April 1, 2020;
    \item Hispanic/Latino origin. 2 levels: Hispanic/Latino, not Hispanic/Latino;
    \item Census race. 63 levels: every combination of Black/African American, American Indian/Native Alaskan, Asian, Native Hawaiian/Pacific Islander, White, and some other race, except ``none of the above.''
\end{enumerate}

\noindent Excluding block, there are 2,016 mutually exclusive, exhaustive combinations of these attribute levels. Including block, there are 11,879,679,168 levels, of which the number of persons in each must be output by the DP mechanisms so that TDA can be used to tabulate each of the redistricting data person-level tables.

For housing-unit data the following attributes and levels constitute the detailed query to create the redistricting data tables:
\begin{enumerate}
    \item Block. 5,892,698 levels; 
    \item Occupancy status. 2 Levels: occupied, vacant.
\end{enumerate}

\noindent Therefore a total of 11,785,396 mutually exclusive, exhaustive combinations of the attribute levels must be output by the DP mechanisms so that TDA can be used to tabulate each of the redistricting data housing-unit tables.  

\subsection{Invariants}
In addition to the measurements associated with the specified tables, the TDA is programmed to bypass the DP mechanisms when instructed by policymakers at the Census Bureau. Queries that bypass DP mechanisms are called \emph{invariants}. Mathematically, an invariant is equivalent to designating infinite privacy loss for the specified tabulation, putting such queries outside the formal guarantees of differential privacy. In addition, invariants challenge the implementation of TDA because they 
may force the post-processing phase of TDA to attempt to solve computationally infeasible problems
(see Section \ref{sec:estimation}). As a consequence, the Census Bureau policymakers have specified invariants only when there was either a Constitutional mandate or an operational constraint in the conduct of the census.

The Constitutional constraint comes from the mandate to apportion the House of Representatives based on the ``actual Enumeration'' of the population on census day. Technically, this constraint only requires the allocation of the seats in the House to be invariant---not subject to changes due to the confidentiality protection system; however, DSEP determined that setting the populations of the fifty states, District of Columbia, and Commonwealth of Puerto Rico invariant was the most transparent and practical way to accomplish this objective. The policymakers accepted the risk associated with releasing these data outside the enhanced protections provided by differential privacy; that is, aggregation is the only disclosure avoidance applied to the state populations.

The operational constraints come from the methods used to keep the Census Bureau's Master Address File (MAF)---the frame for the decennial census and many other surveys---up to date. The MAF is a database of known addresses and features or characteristics of these addresses. The most important feature is \emph{living quarters}, which means that the address designates a domicile capable of housing at least one human being. Fixed domiciles come in two types: housing units, which usually domicile a single household, and group quarters, which usually domicile a group of unrelated individuals.\footnote{The technical definitions are based on whether or not there are separate entrances and shared facilities like kitchens and bathrooms, not on the familial relations among the inhabitants. Like many dichotomies applied to human populations, there is a gray area where housing units and group quarters are difficult to distinguish.} There are also living quarters that are not fixed domiciles such as service-based facilities (soup kitchens, shelters, etc.) and transient locations (tents, campgrounds, etc.). In addition, there are addresses for facilities that are not deemed living quarters at all, such as businesses, highway medians, and fire hydrants. 

Maintaining the MAF is a continuous process involving many operations. A few are salient to the disclosure avoidance system. First, by statute, the Census Bureau conducts the Local Update of Census Addresses (LUCA) late in the census decade. The LUCA operation enlists partners at all levels of government---states, recognized tribes, counties, and municipalities---to review the addresses in the MAF. In support of this operation, the Census Bureau publicly distributes certain summaries of the MAF, among these public tabulations are the number of addresses in each block that represent housing units and the number that represent group quarters of different types. While the addresses associated with these housing units and group quarters remain confidential, the counts are used to guide local partners to areas they believe may be incomplete. In these areas, the local partners supply candidate addresses, which are used to update the MAF. Second, throughout the decade, the Census Bureau cooperates with local expert demographers to update the location and types of group quarters facilities in an area. Since group quarters are in scope for the American Community Survey, this cooperation also affects that survey. Critically, the local demographers use public information on facilities like prisons, nursing homes, dormitories, and barracks to facilitate timely updates of the MAF group quarters data. For operational reasons, then, the Census Bureau treats the count of physical addresses on a block, their designation as housing units or group quarters, and the type of group quarters facility as public information. Note that while unoccupied (vacant) housing units are enumerated in the census, vacant group quarters are not. Hence, TDA treats housing units and occupied group quarters as invariant. The Census Bureau policymakers interpret this decision as requiring the disclosure avoidance to be consistent with public information. This decision means that a prison, or other group quarters, will be properly labeled in census tabulations if it is properly labeled in the confidential data, but the number and characteristics of the population living in those facilities or in housing units is protected by the DP mechanisms.

The complete list of invariants implemented in TDA is:

\begin{enumerate}
    \item total population for each state, the District of Columbia, and Puerto Rico;
    \item the count of housing units in each block, but not the population living in these housing units;
    \item the count and type of occupied group quarters in each block, but not the population living in these group quarters.
\end{enumerate}

\subsection{Edit constraints and structural zeros}
Edit constraints and structural zeros occur when a particular combination of values of the characteristics in the confidential data is deemed impossible a priori; that is, before any data are collected. For example, if the  relationship of person two to person one (also called the householder) is ``natural child,'' then person two cannot have age greater than person one's age.
Edit constraints and structural zeros modify the sample space of the final output data. In the case of the DAS, this would be the post-processed results (MDF) but not the intermediate noisy measurements.

Edit constraints are conceptually different from invariants; however, when examining the full constraint set in TDA, they are mathematically indistinguishable. Nevertheless, it is important to maintain the distinction. Edit constraints and structural zeros are imposed on the confidential data during the process of creating the Census Edited File. They are never violated in the input data to TDA by design; therefore, they must also be respected in the output data. Invariants, on the other hand, are statistics computed from the input data that deliberately skip the differential privacy mechanism. They are passed to the post-processing without noise injection. Hence, the distinction is that edit constraints and structural zeros define the sample space for the census. They are features of its design, not part of the information collected from respondents. Edit constraints and structural zeros are public information known before data collection while invariants are statistics collected in the census data that bypass the confidentiality protection system by policy.\footnote{Invariants complicate the standard privacy semantics. Edit constraints and structural zeros do not. We save this discussion for future work.}

The complete edit specification for the 2020 Census Edited File is not a public document, nor is the complete edit specification for the 2010 Census Edited File. The version of TDA that processed the redistricting data product contains only a few edit constraints and structural zeros. The most important of these are listed here:

\begin{enumerate}
    \item for the race variable, the combination ``none of the above'' is a structural zero;
    \item for occupancy status, there are no unoccupied group quarters.
\end{enumerate}

The race constraint is derived from OMB Statistical Policy Directive 15 \citep{spd15}, which requires the use of the categories listed in the redistricting data schema except for ``some other race,'' which was Congressionally mandated \citep{pl111:law}. Because a respondent cannot select ``none of the above,'' there are $2^6-1$ allowable combinations of the six race categories, not $2^6$. There are unoccupied group quarters on enumeration day, but they are removed from the MAF universe used in the Census Edited File. 

The structural zero on unoccupied group quarters, the occupied group quarters invariant, and the housing unit invariant interact in creating the TDA constraint set. If there is an occupied group quarters facility of a particular type in the relevant geographic area, then at least one person must be assigned in post-processing to that facility. If there is a housing unit in the particular geography, however, there is no similar requirement because occupancy status is not invariant for housing units. 
However, the number of householders (person one on the questionnaire) cannot be greater than the number of housing units. Aside from this, 
TDA treats the data for persons and housing units separately. Hence, occupancy status is protected without reference to the person data. This interaction is deliberate. Joining the housing unit data to the person data and then applying a differentially private mechanism to the joined data is a much harder problem because the sensitivity of queries (see Definition \ref{def:sensitivity}) about the composition of a household, including how many persons live in the housing unit, is the maximum allowable household size, not one. Join queries like this will be processed by DP mechanisms implemented in algorithms that do not post-process to microdata. These data products will be released  at a later date as part of the Detailed Demographic and Housing Characteristics (Detailed-DHC) data release. Users may be tempted to divide the total household population in a particular geography from table P1 by the total number of occupied housing units in that geography from table H1. For well-populated geographies this answer will be useful; however, for sparsely populated geographies it will not be reliable. The join query DP mechanisms implemented in the Detailed-DHC are designed to produce reliable data for average household size.

\section{Mechanism Overviews}
\label{sec:mechanism:overview}
\subsection{The discrete Gaussian mechanism}
The TDA and the block-by-block algorithm, which we describe in Appendix \ref{subsec:block_by_block}, rely on DP mechanisms responsible for generating the noisy measurements. For the TDA, we use the discrete Gaussian mechanism, based on the discrete Gaussian distribution (Definition \ref{def:discrete_gaussian}). The multivariate discrete Gaussian mechanism is described in Theorem \ref{thrm:mdg_cdp}. Setting $a=1$ yields the univariate case \citep{NEURIPS2020_b53b3a3d,canonne2020discrete}.
Let $\mathbb{Z}$ represent the space of integers and $\mathcal{X}^{n}$ a dataset of $n$ records where $\mathcal{X}$ is the sample space or data universe. Theorem \ref{thrm:mdg_cdp} uses the concept of query sensitivity provided in Definition \ref{def:sensitivity}. Query sensitivity is important because it affects the scale of the noise distribution for a fixed privacy-loss value.

\begin{definition}\label{def:sensitivity}
    ($L_p$ Query Sensitivity)
    For $p \in (1,2, \ldots)$, the \textbf{$L_p$ sensitivity} of the query $q:\mathcal{X}^{n}\rightarrow\mathbb{R}^a$ 
    for the dataset with sample space $\mathcal{X}^{n}$ is defined as 
    $$
    \max_{x \sim x' \in \mathcal{X}^n} ( \sum_{j=1}^{a}  |q_j(x) - q_j(x')|^p )^{\frac{1}{p}}.
    $$
\end{definition}

\begin{definition}\label{def:discrete_gaussian}
(Discrete Gaussian Distribution) Let $\mu,\sigma \in \mathbb{R}$ and $\sigma>0$. The \textbf{discrete Gaussian distribution}, denoted by $\mathcal{N}_{\mathbb{Z}}(\mu, \sigma^2)$, with location parameter $\mu$ and scale parameter $\sigma$ has probability distribution
$$
P(X=x) = \frac{\exp(-(x-\mu)^2/2\sigma^2)}{\sum_{y \in \mathbb{Z}} \exp(-(y-\mu)^2/2\sigma^2)} \; \forall \, x \in \mathbb{Z}.
$$
\end{definition}


\begin{theorem}\label{thrm:mdg_cdp}
(Multivariate Discrete Gaussian Mechanism; Theorem 14 \cite{canonne2020discrete} ) Let $\sigma_1, \cdots, \sigma_a > 0$ and $\rho \geq 0$. For a query $q:\mathcal{X}^n \rightarrow \mathbb{Z}^a$, suppose the rescaled query $q'=(\frac{q_1}{\sigma_1}, \dots, \frac{q_a}{\sigma_a})$ has $L_2$ sensitivity at most $\sqrt{2\rho}$. Define a randomized algorithm $M: \mathcal{X}^n \rightarrow \mathbb{Z}^a$ by $M(x) = q(x)+y$, where $ y_j \sim \mathcal{N}_{\mathbb{Z}}(0, \sigma^2_j)$ independently for $j=1, \cdots, a$.  Then $M$ satisfies $\rho$--zCDP.
\end{theorem}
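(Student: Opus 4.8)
The goal is to verify the defining inequality of $\rho$--zCDP from Definition \ref{def:zcdp}: for every pair of neighbors $x \sim x'$ and every $\alpha \in (1,\infty)$, the Rényi divergence $D_\alpha(M(x) \| M(x'))$ is at most $\rho\alpha$. The plan is to exploit the product structure of the noise. Since the coordinates $y_1, \dots, y_a$ are independent, both $M(x)$ and $M(x')$ are product distributions over $\mathbb{Z}^a$, and the Rényi divergence is additive across independent factors, so
$$
D_\alpha(M(x) \| M(x')) = \sum_{j=1}^a D_\alpha\big(\mathcal{N}_{\mathbb{Z}}(q_j(x), \sigma_j^2) \,\big\|\, \mathcal{N}_{\mathbb{Z}}(q_j(x'), \sigma_j^2)\big).
$$
This reduces the problem to a one-dimensional bound on the divergence between two discrete Gaussians sharing a common scale but with means $\mu = q_j(x)$ and $\nu = q_j(x')$, both of which are \emph{integers} because $q$ maps into $\mathbb{Z}^a$.

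First I would establish the univariate inequality $D_\alpha(\mathcal{N}_{\mathbb{Z}}(\mu, \sigma^2) \| \mathcal{N}_{\mathbb{Z}}(\nu, \sigma^2)) \le \alpha(\mu-\nu)^2/(2\sigma^2)$ for integers $\mu, \nu$. Writing $P, Q$ for the two densities, both carry the same normalizing constant $Z = \sum_{y\in\mathbb{Z}} \exp(-(y-\mu)^2/2\sigma^2)$, since shifting an integer-centered sum over $\mathbb{Z}$ only reindexes it. Expanding $P(x)^\alpha Q(x)^{1-\alpha}$ and completing the square in the exponent gives $\alpha(x-\mu)^2 + (1-\alpha)(x-\nu)^2 = (x-c)^2 + \alpha(1-\alpha)(\mu-\nu)^2$ with $c = \alpha\mu + (1-\alpha)\nu$, so that
$$
\sum_{x\in\mathbb{Z}} P(x)^\alpha Q(x)^{1-\alpha} = \frac{Z_c}{Z}\exp\!\left(-\frac{\alpha(1-\alpha)(\mu-\nu)^2}{2\sigma^2}\right), \qquad Z_c = \sum_{x\in\mathbb{Z}} \exp\!\left(-\frac{(x-c)^2}{2\sigma^2}\right).
$$

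The crux---and the step I expect to be the main obstacle---is showing $Z_c \le Z$, i.e.\ that the discrete-Gaussian normalizing constant, viewed as a function of its center, is maximized at integer centers. Unlike the continuous Gaussian, whose normalizing constant is center-independent, the discrete version is only periodic in the center with period one, so this genuinely requires an argument. The clean route is Poisson summation, which yields $Z_c = \sigma\sqrt{2\pi}\big(1 + 2\sum_{k\ge 1} e^{-2\pi^2\sigma^2 k^2}\cos(2\pi c k)\big)$; every summand is maximized simultaneously exactly when the cosines all equal one, that is when $c \in \mathbb{Z}$, and since $Z$ corresponds to an integer center we obtain $Z_c \le Z$. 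Feeding this back, and using $\alpha(1-\alpha) < 0$ for $\alpha > 1$, gives $\sum_x P(x)^\alpha Q(x)^{1-\alpha} \le \exp\big(\alpha(\alpha-1)(\mu-\nu)^2/2\sigma^2\big)$, whence dividing the logarithm by $\alpha - 1$ produces the claimed univariate bound.

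Finally I would reassemble the coordinates. Summing the univariate bounds gives
$$
D_\alpha(M(x) \| M(x')) \le \frac{\alpha}{2}\sum_{j=1}^a \left(\frac{q_j(x) - q_j(x')}{\sigma_j}\right)^2 = \frac{\alpha}{2}\,\big\| q'(x) - q'(x') \big\|_2^2,
$$
and the hypothesis that the rescaled query $q'$ has $L_2$ sensitivity at most $\sqrt{2\rho}$ (Definition \ref{def:sensitivity}) bounds the squared norm by $2\rho$. Therefore $D_\alpha(M(x)\|M(x')) \le \alpha\rho$ for all $\alpha > 1$ and all $x \sim x'$, which is exactly the $\rho$--zCDP condition.
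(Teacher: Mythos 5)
The paper does not prove this statement at all: it is imported verbatim as Theorem~14 of \citet{canonne2020discrete}, and the only place the surrounding machinery is exercised is in Appendix~\ref{sec:approx_dp_proof}, where the theorem is \emph{applied} to the TDA query workload. So there is no in-paper proof to compare against; the relevant benchmark is the proof in the cited reference. Measured against that, your argument is correct and is essentially the canonical one: additivity of R\'enyi divergence over independent coordinates reduces the claim to the univariate shifted discrete Gaussian, the integer means make the two normalizing constants coincide, completing the square isolates the factor $Z_c/Z$ times $\exp\bigl(\alpha(\alpha-1)(\mu-\nu)^2/(2\sigma^2)\bigr)$, and the one genuinely nontrivial ingredient --- that the theta-type sum $Z_c$ is maximized at integer centers --- is exactly the Poisson-summation lemma Canonne, Kamath, and Steinke prove for this purpose. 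You correctly identified that step as the crux; without it the argument would silently assume the continuous-Gaussian property that the normalizer is center-independent, which fails on $\mathbb{Z}$. The final assembly, bounding $\frac{\alpha}{2}\lVert q'(x)-q'(x')\rVert_2^2$ by $\alpha\rho$ via the stated sensitivity hypothesis, is exactly how the multivariate statement is meant to be used, so nothing is missing.
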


The discrete Gaussian mechanism was not always the DP mechanism in the TDA. In earlier work we used the geometric mechanism, which like the discrete Gaussian mechanism produces errors with integer values but uses the double-geometric rather than the discrete Gaussian distribution. The discrete Gaussian mechanism was ultimately chosen based on empirical tests of the accuracy of the TDA using each mechanisms with comparable privacy-loss budgets. The discrete Gaussian distribution has smaller tail probabilities than the double-geometric distribution, and 
this provably reduces the worst-case errors associated with the microdata requirement \citep{anon:2021}.

\subsection{Notation}
The census-taking process results in a person-level dataset of $n$ records with $k$ characteristics (e.g., age, race, ethnicity) including geography. Each characteristic may take exactly one of a finite number of values, which define the sample space, $\vec{\chi}$. An equivalent definition exists for the sample space of housing units. The sample space for persons is defined by enumerating all combinations of the values of every person characteristic measured in the census. Thus, $c=\text{Cardinality}(\vec{\chi})$ is the number of rows in $\vec{\chi}$, and represents the total possible number of combinations of the characteristics of the persons. For the 2020 Census Redistricting Data (P.L. 94-171) Summary File, $c = 11,879,679,168$
for the person-level data and $c = 11,785,396$ for the housing-unit-level data. Of course, these calculations show the dimensionality of these sample spaces for a single person or unit. There are combinatorially more feasible databases satisfying these schemas that are also consistent with all invariants.

Geographic levels play an important role in the methodology. It is useful to consider the cardinality of the sample space disregarding geography. Let the sample space without geography be represented by $\vec{\chi^*}$ and its cardinality by $c^*=\text{Cardinality}(\vec{\chi^*})$. For the redistricting data case, $c^* = 2,016$ for the person-level data and $c^* = 2$ for the housing unit level data.         

Instead of representing the data as an $n \times k$ person-level matrix (also called a \emph{table} in the database literature), another way to represent the data is as a fully saturated contingency table (also called a \emph{histogram} in the privacy-preserving data publication literature) where every cell corresponds to a possible record type as defined by the schema and its value is the number of records of that type. Rather than using a multi-dimensional array, such as a contingency table, it is notationally convenient to consider instead the flattened table, a length $c$ vector flattened in any predefined ordering that spans the schema. Let $\mathbf{x}$ represent the contingency table vector.    

A set of linear queries on $\mathbf{x}$ is represented by an $(a \times c)$ matrix $\mathbf{Q}$ and the vector of query answers of length $a$ is obtained by matrix multiplication of the query matrix and the contingency table vector: $\mathbf{Q} \mathbf{x}$. Let $\mathbf{\widetilde{M}}$ represent a differentially private random vector of answers to a set of linear queries. Measurements have the form $\mathbf{\widetilde{M}} = \mathbf{Q} \mathbf{x} + \mathbf{y}$, where $\mathbf{y}$ is a vector of independent random variables. Specifically in the context of the TDA, we assume $\mathbf{y} \sim \mathcal{N}_{\mathbb{Z}}(0, \sigma^2_j)$, where $j=1, \cdots, a$, a discrete Gaussian distribution with potentially unique variances $\sigma^2_j$ for each of the $a$ noisy measurements.  In the TDA, to turn the set of $\sigma^2_j$ values into a $\rho$ value, we  use Theorem \ref{thrm:mdg_cdp}. See Appendix \ref{sec:approx_dp_proof} for additional details.

A key feature of census data and tabulation products is geography. We use the symbol $\ngeo$ to represent a geographic hierarchy (US, state, county, tract, block group, block for the United States; PR, muncipio, tract, block group, block for Puerto Rico.): a directed, rooted tree with $\nroot$ as its root. Given a node $\nnode\in \ngeo$, we let $\nparent(\nnode)$ represent its parent geographic node in the hierarchy and $\nchild(\nnode)$ represents the set of geographic nodes that are its children. We let $\leaves(\nnode)$ represent all of the leaves under node $\nnode$ and $\leaves(\ngeo)$ to represent all of the leaves. In particular $\leaves(\nroot)=\leaves(\ngeo)$. We refer to the set of all nodes at a fixed distance from $\nroot$ as a \emph{geographic level}, which in the case of the census may be one of the US, state, county, tract, block group, or block.\footnote{There are additional tabulation geographies such as Census-designated places and  American Indian-Alaska Native-Native Hawaiian tribal areas all of which can be constructed as aggregations of blocks.}


Let $\mathbf{x}_{\nnode}$ be the contingency table vector for a specific geographic node $\nnode$. By definition, all cells of the vector not associated with the given geography are $0$. Therefore it is convenient to view $\mathbf{x}_{\nnode}$ as a length $c^*$ vector, disregarding the geographic dimension of the contingency table. Similarly, let $\mathbf{\widetilde{M}}_{\nnode} = \mathbf{Q} \mathbf{x}_{\nnode} + \mathbf{y}$ be a geography specific version of the differentially private random vector of answers to a set of linear queries. For the purposes of the TDA, it is necessary to group all the children of a node together to jointly estimate the solution. Let $\mathbf{x}_{\text{children}(\nnode)}$ be the contingency table vector obtained by stacking $\mathbf{x}_{\text{child}}
\text{ for every}
\text{ child} \in \text{children}(\nnode)$, making it a $(c^* \times |\text{children}(\nnode)|)$ length vector where $|\text{children}(\nnode)|$ is the number of children of node $\nnode$. Let $\mathbf{\widetilde{M}}_{\text{children}(\nnode)}$ represent a the similarly stacked version for the differentially private measurements.        

The TopDown Algorithm processes the noisy measurments from the DP mechanism and the invariants into an estimate of the contingency table vector. We use $\hat{\bf{x}}$ or equivalently $\hat{\bf{x}}_{\nnode}$, $\forall \nnode \in \ngeo$, to denote the output of TDA.

Depending on the specifics of the invariants and edit constraints, both equality and inequality constraints may be imposed within TDA. Specifically, in the redistricting data use case, inequality constraints are due to invariants imposed on the counts of housing units and occupied group quarters because these put bounds on the number of people that must be in that geographic unit. For example, for the invariant on the number of occupied group quarters facilities of a certain type by block, there must be at least one person per group quarters type in that block because vacant group quarters facilities are out-of-scope for the census. We denote the constraints on the algorithm output as a result of the invariants as a set of equality and inequality constraints:  
\begin{align*}
    \mathbf{C}^{eq} \hat{\mathbf{x}} &= \mathbf{c}^{eq}\\
    \mathbf{C}^{u} \hat{\mathbf{x}} &\leq \mathbf{c}^{u},
    \end{align*}
where $\mathbf{c}^{eq}$ and $\mathbf{c}^{u}$ are calculated from $\mathbf{x}$ and possibly additional housing unit data. For example, total state populations are invariant; therefore, the state population for each state is an element of $\mathbf{c}^{eq}$ and the rows of $\mathbf{C}^{eq}$ force the elements of $\hat{\mathbf{x}}$ that correspond to components of the state population to sum to the invariant total. Another example is group quarters populations in any geographic unit. Group quarters must be occupied; therefore, if $k$ occupied group quarters facilities of a given type are present in a particular geographic unit, this generates a lower bound constraint of $k$ that is an element of $\mathbf{c}^{u}$, and the rows of $\mathbf{C}^{u}$ force the elements of $\hat{\mathbf{x}}$ that correspond to components of the group quarters population of that type to sum to at least $k$.\footnote{We specify all inequality constraints as upper bounds in this paper and implement lower bounds by negating the inequality.}

\subsection{Problem statement and utility criteria}

The census ultimately produces a set of tabulations that can be characterized as a query matrix $\mathbf{A}$ multiplied by the contingency table vector. Without any disclosure avoidance applied, the tabulations would be calculated as $\mathbf{A} \mathbf{x}$. While not the \emph{true value} as this term is defined in classical statistics---the unknown population quantity that is the estimand for a census or survey---because of the uncertainty described in Section \ref{sec:census_est}, it is treated as the reference value for purposes of disclosure avoidance because it is the result of answering the query directly from the confidential data. Therefore, the goal of the DAS is to implement a sequence of algorithms that gives optimal tabular estimates  $\mathbf{A} \mathbf{\hat{x}}$ relative to $\mathbf{A} \mathbf{x}$  according to some metric for a given privacy-loss budget. 

The utility of the algorithms developed for the redistricting data was first and foremost designed to meet stringent fitness-for-use accuracy targets for the redistricting and Voting Rights Act use cases. This is a challenging task because the geographic units of interest are the voting districts that will be drawn after the data are published. Hence, the target geographies cannot be specified in advance. Fortunately, these voting districts divide political entities like states, counties, incorporated places, school districts, and tribal areas whose geographies are pre-specified. Therefore, error metrics that optimize accuracy within these predefined political entities help properly control the metrics for their equal-population future voting districts. The collections of blocks in the voting districts are not arbitrary. They cover the political entity and form aggregates of approximately equal total population. When the political entity has a low population, there are usually only a few voting districts or none at all. While a complete description of the criteria for forming legislative districts is beyond the scope of this paper, more details about the compactness, contiguity, political boundary, and community of interest requirements are provided by the nonpartisan organization that represents states' interests in the development of redistricting data, the \citet{ncsl:2021}.

We performed explicit experimental parameter tuning
using 2010 Census data
to ensure that the largest racial/ethnic group (as defined by the OMB in Statistical Policy Directive 15) in off-spine geographies (e.g., minor civil divisions, Census places, American Indian Reservations, etc.) with populations of at least $500$ people, expressed as a percentage of the total population, was within $\pm 5$ percentage points of the enumerated percentage at least $95\%$ of the time.\footnote{For practical reasons related to the clock time required to run TDA and the license limitations on the number of simultaneous numerical optimization instances we could run (2048), we estimated the $95\%$ compliance using geographic variation within the same geounit type (county, block group, minor civil division, AIAN tribal area, etc.). At the conclusion of these experiments, we confirmed compliance with the $95\%$ target using 25 independent runs of TDA with the April 2021 parameter settings as discussed in Section \ref{sec:utility_experiments}.} We also examined the same metric expressed as a percentage of the voting-age population and found that controlling the ratio to the total population also controlled the error in the percentage of the largest race/ethnicity group in the voting-age population. This utility goal was developed in consultation with stakeholders and is analyzed extensively in \citet{wright:irimata:2021}. 

Independently, we also evaluated the performance of TDA by examining the absolute and relative error of many statistics by quantiles of the count in the 2010 CEF. These error metrics were used for total population, voting-age population, number of races, population in each OMB race category, and population in each ethnicity category within OMB race category. These quantile statistics were used primarily to determine the interactions of changes in the privacy-loss budget allocation among the queries, which cannot be done ex ante because, while the errors from the DP mechanism are independent of the data, the errors in post-processing are not.

Our accuracy assessments included many metrics \citep{DASMetricsOverviewApril28, DASMetricsOverviewNov16} using the 2010 Census data that were developed by demographers at the Census Bureau or suggested by external users. 
They included
many different error measures such as mean absolute error, mean absolute percent error, percent difference thresholds, outliers, as well as different characteristics of interest relevant to redistricting (total populations, total population 18 years and over, occupancy rates, Hispanic or Latino origin population, etc.) across many different geographic levels.


The existing policy and scientific literature provide very little guidance on managing a privacy-loss budget to trade-off accuracy on multiple dimensions. When tuning the TDA, our experiments were designed to find the smallest privacy-loss budget that met specified accuracy goals. In particular, we used a tight standard for redistricting data accuracy in the smallest voting districts that the Department of Justice Voting Section provided as examples of Section 2 scrutiny under that law. Finding the minimum privacy-loss budget that could achieve this goal illustrated where other accuracy objectives deteriorate---in particular, statistics at the tract and county level, which were down-weighted by the redistricting accuracy measure. Other subject matter experts, internal and external, then laid out their accuracy goals for these other statistics. Internal statistical disclosure limitation experts, including some of the authors of this paper, demonstrated that these accuracy goals---redistricting and general demographic---could be achieved with a modest privacy-loss budget at the block level. The iterative experimental tuning process is described in more detail in Section \ref{sec:utility_experiments}.

The statistical optimization problem can be summarized as: given accuracy targets based on $\mathbf{A} \mathbf{\hat{x}}$ relative to  $\mathbf{A} \mathbf{x}$, choose query matrices $\mathbf{Q}^* = \left\{ {\mathbf{Q}}_1, {\mathbf{Q}}_2 \cdots, {\mathbf{Q}}_m \right\}$ with corresponding privacy-loss budgets 
$\rho^* =$ \{$ \rho_1$, $\rho_2$,  $\cdots, \rho_m \}$ 
and estimator 
$$\hat{\mathbf{x}} = g \left(  \mathbf{M}^* ,\mathbf{Q}^*, \rho^*, \mathbf{C}^{eq}, \mathbf{c}^{eq}, \mathbf{C}^{u}, \mathbf{c}^{u} \right) $$
that meets the accuracy targets such that the total privacy-loss budget 
$\sum_{i=1}^{m} \rho_i$
is minimized, where $\hat{\mathbf{x}} \in \mathcal{Z}^{0+}_d$, meaning the estimated vector is of length $d$ and has non-negative integer elements, and $\mathbf{C}^{eq} \hat{\mathbf{x}} = \mathbf{c}^{eq}$, and $\mathbf{C}^{u} \hat{\mathbf{x}} \leq \mathbf{c}^{u}$, meaning the solution satisfies the constraints. Here we denote ${\mathbf{M}}^*= \left\{\mathbf{\widetilde{\mathbf{M}}}_1, \mathbf{\widetilde{\mathbf{M}}}_2, \cdots, \mathbf{\widetilde{\mathbf{M}}}_m \right\}$ as the differentially private vector of noisy measurements corresponding to $\mathbf{Q}^*$ and 
$\rho^*$. Note the three major elements of the algorithm design:
 
 \begin{enumerate}
 \item choosing which query matrices to use differentially private noisy measurements;
 \item choosing the accuracy of each of the noisy measurements; and
 \item optimally combining the noisy measurement information to produce a non-negative integer estimate of $\mathbf{x}$ that satisfies all constraints. 
 \end{enumerate}

While these algorithm elements should work hand-in-hand, we note that there could be many variations of (3) using the same measurements that are a product of elements (1) and (2) that produce results with good utility. Not only are there many objective functions that can be used for optimization, but many variations of methods to produce non-negative integers satisfying a set of constraints. Over the course of this work, we experimented with each of these elements in order to build the best algorithm for key redistricting and demographic use-cases.       

\subsection{Queries}
\label{subsec:queries}

While it would be possible to ask a linear query $\mathbf{Q}$ with generic integer elements, we restrict ourselves to queries that consists of only binary elements. Furthermore, while mathematically it is often convenient to write a single query matrix for the entire problem or for a single geography, in practice we think of sets of queries stacked together into a single query matrix and define them in terms of dimensions of the schema. In order to explain the specifics of the queries used in TDA, as listed in, for example, Table \ref{table:rho_allocation_persons_production}, we describe here the structure of the complete query matrix, which is partitioned by level of the geographic hierarchy and marginal query group within each geographic level. Privacy-loss budget ($\rho$) allocations are made to the geographic level and then to the marginal query group within each level. When these partitions are stacked vertically within geographic level and concatenated horizontally across the geographic hierarchy, they generate the full linear query matrix.

We use the term \textit{marginal query group} to refer to a matrix with rows given by linear queries such that the product with the data vector, $\mathbf{x}$, provides the estimates for a given marginal of the data.\footnote{DAS queries can be more completely characterized as being constructed by two steps. (a) Defining a list of ``recodes'' of individual schema attributes that could drop or skip some levels of the attribute and/or aggregate or union some attributes together. In general, these recodes-of-attributes are then mutually exclusive but not necessarily exhaustive; and (b) taking a Cartesian product represented via Kronecker products in sparse matrices of these recodes.} For example, in the redistricting data, we might ask the census race marginal query group and the voting-age marginal query group for a given geography.  By the census race query group, we mean the 63 total population counts in each of the OMB-designated race categories for the given geography.  By the voting-age query group, we mean the 2 total population counts for 17 or younger and 18 and older. We could also ask the query group census race by voting age which would be 126 counts of each of the levels of OMB-designated race crossed by voting age. In addition to individual or crossed dimensions of the schema, we also consider the cross of all dimensions of the schema 
as well as the collapse of all dimensions, which we give special names: the \emph{detailed} query and the \emph{total} query, respectively. Lastly, we also consider Cartesian products of mutually exclusive collapses of a schema attribute as a query. For example, we could collapse the 63-level census race attribute into a 3-level attribute with levels: 1 race selected, 2 races selected, and 3 or more races selected. A query strategy may also include Cartesian products of collapsed categories.

The individual marginal query groups we consider have two important properties. They are exhaustive in terms of the levels (we do not query 18 and older without also querying 17 and younger) and the levels are mutually exclusive. Along with the use of binary elements, these properties ensure that the $L_2$ sensitivity of a single marginal query group is at most and $\sqrt{2}$. $L_p$ sensitivity for a single query is defined in Definition \ref{def:sensitivity}. In the query group it means for any two possible neighboring databases, where neighboring is defined as the result of changing the value of a single record in the database (which maintains the total number of records), the maximum over all the queries in the group is at most $\sqrt{2}$. Similarly to the sensitivity of a single query (Definition \ref{def:sensitivity}), the sensitivity of a query group is important because it affects the scale of the noise needed for a fixed privacy-loss or $\rho$ value. The larger the sensitivity, the larger the scale of the noise that is needed in the algorithm. By construction, all queries in a query group receive the same precision (a function of the $\rho$ allocation and the sensitivity) for their DP mechanism noisy measurements, though the precision given to each query group may differ both within and between geographies.

\subsection{The TopDown Algorithm}
\label{subsec:TDA}

\label{tda-motivation} The motivation for the TDA is to leverage the tree-like hierarchical structure of census geographies in order to produce estimates of the contingency table vector that are more precise than algorithms focused on directly and independently estimating block-level vectors then aggregating\footnote{See Appendix \ref{subsec:block_by_block} for more information.} in a scalable manner.
The TDA can be thought of as two phases, a measurements phase and an estimation phase.\footnote{In other contexts, the estimation phase has sometimes been called \emph{post-processing}.} Table \ref{tab:TDA_summary} presents a schematic summary of the algorithm.

\begin{table}[]
    \centering
        \caption{The TopDown Algorithm Summary}
        \label{tab:TDA_summary}
    \begin{tabular}{|p{14.5cm}|}
        \hline
        \emph{Measurement Phase}\\
         \hline
         \begin{tabular}{l}
            (1) For $\text{level } i \in \{\text{US}, \text{state}, \text{county}, \text{tract}, \text{block group}, \text{block}\}$\\
            \hspace{0.5cm} (a) Determine the privacy-loss budget for the level $i$;\\
            \hspace{0.5cm} (b) Take differentially private noisy measurements $\widetilde{\mathbf{M}}_{\gamma}$ for all nodes in level $i$.\\
         \end{tabular}
         \\
         \hline
         \emph{Estimation Phase}\\
         \hline
         \begin{tabular}{l}
              (1) For the US root node $\gamma_0$ estimate the contingency table vector $\mathbf{x}_{\gamma_0}$ by\\
              \hspace{0.5cm} (a) Estimating a non-negative solution $\tilde{\mathbf{x}}_{\gamma_0}$ from the set of differentially private\\ 
              \hspace{0.5cm} noisy measurements $\widetilde{\mathbf{M}}_{\gamma_0}$, invariants, and edit constraints at the US level;\\
              \hspace{0.5cm} (b) Estimating a non-negative integer solution $\hat{\mathbf{x}}_{\gamma_0}$ from $\tilde{\mathbf{x}}_{\gamma_0}$ by controlled rounding. \\
              (2) For $\text{level } i \in \{\text{state}, \text{county}, \text{tract}, \text{block group}, \text{block}\}$, let $P_i$ represent the set of\\
              distinct parents among all nodes at level $i$. For each parent node $\gamma\in P_i$, estimate\\
              the joint contingency table vector $\mathbf{x}_{\text{children}(\gamma)}$ by\\
              \hspace{0.5cm} (a) Estimating a non-negative solution $\tilde{\mathbf{x}}_{\text{children}(\gamma)}$ from the set of differentially private\\
              \hspace{0.5cm} noisy measurements $\widetilde{\mathbf{M}}_{\text{children}(\gamma)}$, invariants, edit constraints, and aggregate constraints\\
              \hspace{0.5cm}  enforcing that the children sum to $\hat{\mathbf{x}}_{\gamma}$;\\
              \hspace{0.5cm} (b) Estimating a non-negative integer solution $\hat{\mathbf{x}}_{\text{children}(\gamma)}$ from $\tilde{\mathbf{x}}_{\text{children}(\gamma)}$ by\\
              \hspace{0.5cm} controlled rounding. 
         \end{tabular}
    \\
    \hline
    \end{tabular}
\end{table}

First, in the measurement phase, sets of differentially private noisy measurements are taken about key queries at each of the geographic levels in the hierarchy. Because queries at different geographic levels (say block and tract) will involve the same individuals, the total privacy loss will be the summation of the privacy loss at the individual geographic levels. Therefore, the distribution of the total privacy-loss budget across levels of the geography as well as the queries within a geographic level must be chosen with care. In the estimation phase, the goal is to produce a non-negative integer estimate of the contingency table vector from the differentially private noisy measurements.

As the name implies, estimation is carried out sequentially, starting with the root geography (US). The algorithm estimates the non-negative integer US-specific contingency table vector from the set of US-level differentially private queries on the data, maintaining any equality or inequality constraints implied by the invariants or due to edit constraints and structural zeros. Due to the complexity of the problems, this is done in a two-step manner. First, we estimate a constrained non-negative weighted least squares solution from the US-level differentially private queries that respects all constraints implied by the invariants. Second, we perform a controlled rounding that finds a nearby non-negative solution that also respects the constraints. See Section \ref{subsec:estimation_routines} for more detail. Once the solution is found, the US-level estimated contingency table vector is considered fixed and used as an additional constraint on the next subproblem in the sequence.

The algorithm then estimates the state-level contingency table vectors, enforcing an aggregation constraint to the US-level contingency table vector and using  the state-level differentially private measurements on the data. This is done using a similar two-step process: first, solving a weighted non-negative least squares problem; then, solving a controlled rounding problem. However, in addition to any constraints that are the product of invariants or edit specifications, an aggregation constraint is added to both subproblems to ensure that the estimated state-level contingency table vectors sum to the US-level estimated contingency table vector.    

Fixing the estimated state-level contingency table vectors, the estimation steps are repeated within each state in order to estimate each of the county-level contingency table vectors, enforcing aggregation constraints for each state. The algorithm then proceeds iteratively to additional geographic levels (tract, block group, and block) until the contingency table vector is estimated for each block. In each step, the differentially private measurements taken at that level are used along with constraints due to invariants, edit specifications and aggregation.

\section{Estimation Routines and Improvements}
\label{sec:estimation}
\subsection{Basic estimation routines}
\label{subsec:estimation_routines}

In an ideal scenario, the contingency table vector $\mathbf{x}$ for all blocks could be estimated in its entirety simultaneously, minimizing error relative to all differentially private measurements at once and respecting invariant constraints; however, the scale of such a problem in the decennial census is several orders of magnitude too large given current computing limitations. Therefore, one way to view  TDA is that it breaks the problem down into smaller,  more computationally tractable sub-problems. Even after doing so, the size of the problems remains quite large. The desired output of the algorithm is a non-negative integer vector $\hat{\mathbf{x}}$ and therefore an integer programming approach would be the natural fit. Unfortunately, the size and complexity of the sub-problems is still too large to directly solve as an integer program. 
Therefore, we further break down each optimization step into two sub-problems: 1) a constrained weighted least squares optimization that finds an estimated non-negative continuous-valued contingency table vector; and 2) a controlled rounding problem that solves a simpler integer program than in the direct method and yields a non-negative integer contingency table vector. For simplicity, we refer to these as the least squares optimization problem and the rounder optimization problem, respectively.

In Sections \ref{subsec:basic_l2} and \ref{subsec:basic_l1}, we present the estimators for the least squares and rounder optimization problems, respectively, that we initially developed. Then, we provide the details of enhancements to these basic estimators using more complex estimators that find a solution by performing a series of optimizations and constraining passes.  We detail these more complex estimators in Sections \ref{subsec:l2_multi_pass} - \ref{subsec:l1_multi_pass}.

\subsubsection{The basic least squares estimator} \label{subsec:basic_l2}
There are two versions of the least squares estimation problem depending upon whether the subproblem includes a single node (as in Step (1)(a) of the estimation phase of TDA) or multiple nodes (as in Step (2)(a) of the estimation phase of TDA). Let $\mathbf{W}$ be the diagonal matrix with the inverse variances of the differentially private random variables along its diagonal. Then, the least squares estimator for a single node $\gamma$ is

\begin{align} \label{l2_opt_1}
    \tilde{\mathbf{x}}_{\gamma} \gets & \argmin_{\mathbf{x}_{\gamma}} (\mathbf{Q} \mathbf{x}_{\gamma} - \widetilde{\mathbf{M}}_{\gamma})^\top \mathbf{W} (\mathbf{Q} \mathbf{x}_{\gamma} - \widetilde{\mathbf{M}}_{\gamma}) \\
    & \text{subject to:} \nonumber\\
    & \mathbf{x}_{\gamma} \ge \mathbf{0}  \text{; and} \nonumber \\
    & \mathbf{C}^{eq} \mathbf{x}_{\gamma} = \mathbf{c}^{eq} \text{; and} \nonumber \\
    & \mathbf{C}^{u} \mathbf{x}_{\gamma} \leq \mathbf{c}^{u}. \nonumber 
    \end{align}

Now, let $\mathbf{W}$ be the diagonal matrix with the inverse variance of the differentially private random variables for each of the children of node $\gamma$ along its diagonal. While choices of $\mathbf{W}$ other than inverse variance were explored (e.g., constant, square-root of variance), both theoretically (according to standard weighted least squares theory) and in practice, the inverse variance provided the best performance overall. Let $\mathbf{I}_{c^*}$ represent a identity matrix of size $c^*$. Then the least squares estimator for the joint solution for the children of $\gamma$ is

\begin{align} \label{l2_opt_2} 
    \tilde{\mathbf{x}}_{\text{children}(\gamma)} \gets & \argmin_{\mathbf{x}_{\text{children}(\gamma)}} (\mathbf{Q} \mathbf{x}_{\text{children}(\gamma)} - \widetilde{\mathbf{M}}_{\text{children}(\gamma)})^\top \mathbf{W} (\mathbf{Q} \mathbf{x}_{\text{children}(\gamma)} - \widetilde{\mathbf{M}}_{\text{children}(\gamma)}) \\
    & \text{subject to:} \nonumber\\
    & \mathbf{x}_{\gamma} \ge \mathbf{0}  \text{; and} \nonumber \\
    & \mathbf{C}^{eq} \mathbf{x}_{\text{children}(\gamma)} = \mathbf{c}^{eq} \text{; and} \nonumber\\
    & \mathbf{C}^{u} \mathbf{x}_{\text{children}(\gamma)} \leq \mathbf{c}^{u} \text{; and} \nonumber\\
    & \begin{bmatrix} 
        \mathbf{I}_{c^*} & \mathbf{I}_{c^*} & \cdots & \mathbf{I}_{c^*}  \\
      \end{bmatrix} \mathbf{x}_{\text{children}(\gamma)} = \hat{\mathbf{x}}_{\gamma}. \nonumber
    \end{align}

In (\ref{l2_opt_1}) and (\ref{l2_opt_2}) we use generic notation for $\mathbf{Q}, \mathbf{C}^{eq}, \mathbf{c}^{eq}, \mathbf{C}^{u}$, and $ \mathbf{c}^{u}$ rather than node specific subscripts in an effort to simplify notation, though it should be noted that the values of $\mathbf{Q}, \mathbf{C}^{eq}, \mathbf{c}^{eq}, \mathbf{C}^{u}$, and $ \mathbf{c}^{u}$ will be different depending on the specific child  or parent node.     

\subsubsection{The basic rounder estimator} \label{subsec:basic_l1}
Similar to the least squares optimization problem, we specify two versions of the solution depending on whether we are estimating over a single or multiple nodes. Let $\floor*{\mathbf{x}}$ be the floor function (applied element-wise to a vector).  Then the rounder estimator for a single node $\gamma$ is    

\begin{align}
     \hat{\mathbf{x}}_{\gamma} \gets & \floor*{\tilde{\mathbf{x}}_{\gamma}} + \mathbf{\hat{y}} \\
     & \mathbf{\hat{y}} = \argmin_{\mathbf{y}} \mathbf{1}^\top | \tilde{\mathbf{x}}_{\gamma} - (\floor*{\tilde{\mathbf{x}}_{\gamma}} + \mathbf{y}) | \nonumber \\
     & \text{subject to:} \nonumber\\
     & y_i \in \{0,1\} \text{ for } y_i \text{ elements of } \mathbf{y} \text{; and} \nonumber \\
     & \mathbf{C}^{eq} \left( \floor*{\tilde{\mathbf{x}}_{\gamma}} + \mathbf{y} \right) = \mathbf{c}^{eq} \text{; and} \nonumber \\
     & \mathbf{C}^{u} \left( \floor*{\tilde{\mathbf{x}}_{\gamma}} + \mathbf{y} \right) \leq \mathbf{c}^{u}. \nonumber 
\end{align}

\noindent Let $\mathbf{I}_{c^*}$ represent a diagonal matrix of size $c^*$. Then the rounder estimator for the joint solution for the children of $\gamma$ is

\begin{align}
     \hat{\mathbf{x}}_{\text{children}(\gamma)} \gets & \floor*{\tilde{\mathbf{x}}_{\text{children}(\gamma)}} + \mathbf{\hat{y}} \\
     & \mathbf{\hat{y}} = \argmin_{\mathbf{y}}  \mathbf{1}^\top|\tilde{\mathbf{x}}_{\text{children}(\gamma)} - (\floor*{\tilde{\mathbf{x}}_{\text{children}(\gamma)}} + \mathbf{y}) | \nonumber \\
     & \text{subject to:} \nonumber\\
     & y_i \in \{0,1\} \text{ for } y_i \text{ elements of } \mathbf{y}  \text{; and} \nonumber \\
     & \mathbf{C}^{eq} \left( \floor*{\tilde{\mathbf{x}}_{\text{children}(\gamma)}} + \mathbf{y} \right) = \mathbf{c}^{eq} \text{; and} \nonumber \\
     & \mathbf{C}^{u} \left( \floor*{\tilde{\mathbf{x}}_{\text{children}(\gamma)}} + \mathbf{y} \right) \leq \mathbf{c}^{u}. \nonumber \\
     & \begin{bmatrix} 
        \mathbf{I}_{c^*} & \mathbf{I}_{c^*} & \cdots & \mathbf{I}_{c^*}  \\
      \end{bmatrix} \left( \floor*{\tilde{\mathbf{x}}_{\text{children}(\gamma)}} + \mathbf{y} \right) = \hat{\mathbf{x}}_{\gamma}. \nonumber
\end{align}
\noindent DAS implements the rounder problem by subtracting the floor and minimizing the distance over the fractional part of the target vector.
This problem is feasible and efficiently solvable if it has the total unimodularity (TUM) property (see Section \ref{subsec:l2_multi_pass}). 


\subsection{Algorithmic improvements: Multi-pass least squares estimator}
\label{subsec:l2_multi_pass}

While we found that the basic least squares estimator worked well overall, we observed a concerning trend in the bias of the estimation of geographic totals. Even when these queries were asked with very high precision, when combined in an optimization with a query with a large number of components (e.g., the detailed query), the resulting estimated data vector would be less accurate on the target query than would be expected. We know that this problem is due to the constraint that all cell counts must be non-negative. We now understand that there is an inherent \emph{uncertainty principle} limiting the accuracy of inequality-constrained least squares. The maximum expected error in either the detailed query or its total depends upon both the privacy-loss budget allocation and the total number of queries \citep{anon:2021}.\footnote{It is now understood that this problem is a direct consequence of requiring TDA to produce microdata. It is not due to the use of DP mechanisms \citep{anon:2021}. The result is related to the implementation difficulties studied in \citet{Balcer:Vadhan:2019}.} Managing this accuracy trade-off between detailed and total queries, which is due entirely to the post-processing requirement to produce privacy-protected microdata, motivated the algorithmic improvements discussed here.

To improve on the single-pass non-linear least squares estimator, we developed a \emph{multi-pass} estimator. Instead of attempting to find a data vector solution that minimizes errors for all noisy queries at once, we use multiple passes of the optimizer to estimate increasingly more detailed tabulations of the data vector. On each pass, after estimating one or more margins of the contingency-table vector, we constrain the following passes to find a solution that maintains the previously estimated margins. For example, for the redistricting data persons schema, we might first estimate and constrain the total population counts in a first pass. Then, estimate the detailed query in a second pass subject to the constraint that its margin equals the total population count from the first pass. Multi-pass processing, therefore, controls the algorithmic error due to the non-negativity constraints by prioritizing the queries based on the expected size of the total query (total population before voting-age population, etc.).

The input of the least squares multi-pass methods described here is a collection of marginal queries and, for each such query group, the vector of DP random measurements for the query group. Let $k\in \{1, \dots, K\}$ represent the passes such that $\tilde{\mathbf{x}}_{\text{children}(\gamma)}^{(k)}$ is the estimated data vector after pass $k$, $\mathbf{Q}^{(k)}$ is the stacked set of query groups used for pass $k$, and $\widetilde{\mathbf{M}}_{\text{children}(\gamma)}^{(k)}$ is the corresponding DP random measurements. Let $\mathbf{B}^{(k)}$ be the matrix formed by stacking 
query group matrices 
used for pass $k$; 
it
will be used for constraints in passes greater than $k$. For brevity, we omit the single node version of the estimator and focus on the joint solution for the children of $\gamma$ which is given by  $\tilde{\mathbf{x}}_{\text{children}(\gamma)}^{(K)}$, where the optimization routine is given by (\ref{l2_opt_multipass}).

\begin{align} \label{l2_opt_multipass} 
    \tilde{\mathbf{x}}_{\text{children}(\gamma)}^{(k)} \gets
    & \argmin_{\mathbf{x}_{\text{children}(\gamma)}} (\mathbf{Q}^{(k)} \mathbf{x}_{\text{children}(\gamma)} - \widetilde{\mathbf{M}}_{\text{children}(\gamma)}^{(k)})^\top \mathbf{W} (\mathbf{Q}^{(k)} \mathbf{x}_{\text{children}(\gamma)} - \widetilde{\mathbf{M}}_{\text{children}(\gamma)}^{(k)})  \\
    & \text{subject to:} \nonumber\\
    & \mathbf{x}_{\text{children}(\gamma)} \ge \mathbf{0}  \text{; and} \nonumber \\
    & \mathbf{C}^{eq} \mathbf{x}_{\text{children}(\gamma)} = \mathbf{c}^{eq} \text{; and} \nonumber\\
    & \mathbf{C}^{u} \mathbf{x}_{\text{children}(\gamma)} \leq \mathbf{c}^{u} \text{; and} \nonumber\\
    & \begin{bmatrix} 
        \mathbf{I}_{c^*} & \mathbf{I}_{c^*} & \cdots & \mathbf{I}_{c^*}  \\
      \end{bmatrix} \mathbf{x}_{\text{children}(\gamma)} = \hat{\mathbf{x}}_{\gamma}  \nonumber \\
    & \text{If } k > 1 :  \nonumber \\
    & \; \; \; \lvert \mathbf{B}^{(j)} \left( \mathbf{x}_{\text{children}(\gamma)} - \tilde{\mathbf{x}}_{\text{children}(\gamma)}^{(j)} \right) \rvert \le  \hat{\tau}^{(j)} \mathbf{1}\; \forall \; j \in \{1, \dots , k-1\}  \nonumber
    \end{align}
The parameter $\hat{\tau}^{(j)}$ represents a small positive constant to help ensure feasibility of future solutions. Rather than specifying this number, we solve the following secondary optimization problem to estimate it at the end of each pass $k$:

\begin{align} \label{l2_opt_multipass_tol}
    \hat{\tau}^{(k)} &\gets \min_{\tau, \mathbf{x}_{\text{children}(\gamma)}} \tau  \\
    & \text{subject to:} \nonumber\\
    & \lvert \mathbf{B}^{(j)} \left( \mathbf{x}_{\text{children}(\gamma)} - \tilde{\mathbf{x}}_{\text{children}(\gamma)}^{(j)} \right) \rvert \le \tau \mathbf{1} \; \forall \; j \in\{1, \cdots , k\}  \nonumber \\
    & \mathbf{x}_{\text{children}(\gamma)} \ge \mathbf{0}  \text{; and} \nonumber \\
    & \mathbf{C}^{eq} \mathbf{x}_{\text{children}(\gamma)} = \mathbf{c}^{eq} \text{; and} \nonumber\\
    & \mathbf{C}^{u} \mathbf{x}_{\text{children}(\gamma)} \leq \mathbf{c}^{u} \text{; and} \nonumber\\
    & \begin{bmatrix} 
        \mathbf{I}_{c^*} & \mathbf{I}_{c^*} & \cdots & \mathbf{I}_{c^*}  \\
      \end{bmatrix} \mathbf{x}_{\text{children}(\gamma)} = \hat{\mathbf{x}}_{\gamma}.  \nonumber 
    \end{align}

Note that the basic least squares estimator can be recovered by choosing $K=1$ and including all query groups for which we have DP random measurements. We have found it generally advantageous to first estimate query groups with elements that are unlikely to be zero using an estimator with negligible bias. Constraining the estimates in subsequent passes to have marginals that are consistent with these earlier query group estimates has the additional effect of reducing the positive bias on cell counts with true values that are zero. Our current estimation strategy limits the positive bias of query group estimates in early passes by omitting query groups from the objective function that are likely to have many true answers that are zero. For example, choosing the first pass to only include query groups that are higher-order marginals (e.g. total, voting-age). However, note that by excluding DP random measurements, we are not including all available information. Note that the choice of query groups does not depend on the confidential data.

\subsection{Algorithmic improvements: Multi-query multi-pass rounder}
\label{subsec:l1_multi_pass}

The basic rounder estimator finds a nearby integer solution by minimizing the distance from the real-valued solution with respect to the individual cell values while still respecting constraints. We found that, as the size of the optimization problems grew, even though in the rounder solution individual cells were forced to be close to their real-valued counterparts, the summation of many cells (e.g., a variable margin) could still change by a large amount. Changes to the individual detailed cell values could add up to make a large change in, for example, the voting-age population. To improve this behavior, we developed an estimation routine that optimizes based on the distance from the real-valued solutions to the queries used in the least squares estimator. Like the least squares multi-pass estimator, we also allowed for multiple passes of the rounder at increasing levels of detail.  

Because the non-negative least squares estimator is solved numerically, the amount of mass (accumulated fractional parts of the least squares solution) the rounder was responsible for estimating increased as privacy-loss budgets increased. That is, as we increased the privacy-loss budget, we wanted to ensure monotone convergence to perfect accuracy. However, when using the original, basic single-pass rounder, as the privacy-loss budget increased many more cell estimates from the least squares problem were 0.999 (not 1.0). As this happened, because the rounder is responsible for estimating a total amount of mass that it is equal to the sum of the fractional parts of the least squares estimator, the rounder became increasingly important to solution quality the larger the privacy-loss budget. Rounding using only the detailed query answers is clearly an inferior estimator of most queries of interest compared to using all queries from the least squares solution. Hence, basic rounding caused very troubling behavior: specifically, an intermediate region in which increased privacy-loss budget resulted in poorer performance on utility metrics. When we implemented multi-pass rounding, we also added the ability to specify additional queries as part of the rounder's objective function. TUM implies restrictions on these extra queries. Without violating TUM, we inserted a single additional set of hierarchically related queries (a `tree of queries') in this way. This modification of the objective function solved the non-monotonicity problem.


Let $k\in \{1, \dots, K\}$ represent the passes such that $\hat{\mathbf{x}}_{\text{children}(\gamma)}^{(k)}$ is the estimated data vector after pass $k$,\footnote{Note that the number of passes for the rounder can be different from the number of passes for least squares.} and $\mathbf{Q}^{(k)}$ is the stacked set of query groups used for pass $k$ (which can be different from the query groups used in the  multi-pass least squares estimator).
Meanwhile, $\tilde{\mathbf{x}}_{\text{children}(\gamma)}$ is the vector produced by the multi-pass least squares estimator.
Let $\mathbf{B}^{(k)}$ be the matrix for the stacked query groups that will be the used as constraints in passes greater than $k$. For brevity, we omit the single node version of the estimator, and focus on the joint solution for the children of $\gamma$. The solution is denoted by $\hat{\mathbf{x}}_{\text{children}(\gamma)}^{(K)}$, the terminal value in the following optimization letting $k=1, \dots, K$: 

\begin{align} \label{l1_multiplass}
     \hat{\mathbf{x}}_{\text{children}(\gamma)}^{(k)} \gets & \floor*{\tilde{\mathbf{x}}_{\text{children}(\gamma)}} + \mathbf{\hat{y}}^{(k)} \\ \nonumber
     & \mathbf{\hat{y}}^{(k)} = \argmin_{\mathbf{y}} \mathbf{1}^\top | \mathbf{Q}^{(k)}  
     \left( \tilde{\mathbf{x}}_{\text{children}(\gamma)} - \left(
     \floor*{\tilde{\mathbf{x}}_{\text{children}(\gamma)}}
     + \mathbf{y} \right)  \right) | \nonumber \\
     & \text{subject to:} \nonumber\\
     & y_i \in \{0,1\} \text{ for } y_i \text{ elements of } \mathbf{y}  \text{; and} \nonumber \\
     & \mathbf{C}^{eq} \left( \floor*{\tilde{\mathbf{x}}_{\text{children}(\gamma)}} + \mathbf{y} \right) = \mathbf{c}^{eq} \text{; and} \nonumber \\
     & \mathbf{C}^{u} \left( \floor*{\tilde{\mathbf{x}}_{\text{children}(\gamma)}} + \mathbf{y} \right) \leq \mathbf{c}^{u}. \nonumber \\
     & \begin{bmatrix} 
        \mathbf{I}_{c^*} & \mathbf{I}_{c^*} & \cdots & \mathbf{I}_{c^*}  \\
      \end{bmatrix} \left( \floor*{\tilde{\mathbf{x}}_{\text{children}(\gamma)}} + \mathbf{y} \right) = \hat{\mathbf{x}}_{\gamma} \nonumber \\
    & \text{If } k > 1 :  \nonumber \\
    & \; \; \; \mathbf{Q}^{(j)} \left( \floor*{\tilde{\mathbf{x}}_{\text{children}(\gamma)}} + \mathbf{y} \right) = \mathbf{Q}^{(j)} \hat{\mathbf{x}}_{\text{children}(\gamma)}^{(j)} \; \forall \; j \in \{1, \dots , k-1\}. \nonumber 
\end{align}

We note that the existence of a solution for the multi-pass rounder is not always guaranteed and depends heavily on the choice of the query groups used in each pass and their ordering. In particular, to guarantee polynomial time solution of the rounder estimator, the mathematical constraints must exhibit total unimodularity (TUM), the condition that every square non-singular submatrix is unimodular (integer and has determinant +1 or -1). TUM does two things. First, if a continuous solution exists to the non-negative least squares problem problem, then TUM implies than an integer solution exists to the rounder problem. Second, TUM implies that finding the rounder solution is tractable. TUM does not, unfortunately, guarantee that the continuous non-negative least squares problem has a solution in the first place. Because the mathematical constraints in the rounder estimator are composed of interactions between the invariants, edit constraints, and hierarchical consistency constraints, ensuring TUM requires limiting invariants and edit constraints as much as possible.

\subsection{Algorithmic improvements: AIAN spine}
\label{subsec:aian_spine}
American Indian and Alaska Native (AIAN) areas tend to be far off the standard tabulation spine and, therefore, are at risk of poor accuracy for a given privacy-loss budget.  For example, the largest AIAN tribal area is the Navajo nation which includes subsets of many counties within the states of New Mexico, Arizona and Utah. To improve the accuracy of the AIAN areas, we define what we call the \emph{AIAN spine}, which  subdivides each state with any AIAN tribal areas into two geographic entities: the union of the legally defined AIAN areas within the respective state and the remaining portion of the state.\footnote{The AIAN spine was developed in collaboration with geographers who specialize in the different tribal area definitions \citep{AIANspine}. We used all Census AIAN and native Hawaiian areas except state-designated tribal statistical areas (SDTSAs) and tribal-designated statistical areas (TDSAs). } Counties, tracts, and block groups are also subdivided as necessary to be fully nested within either the AIAN branch or the non-AIAN branch of the state's hierarchy. Note that AIAN areas, like all tabulation geographic entities, are composed of tabulation blocks; therefore the creation of this spine did not require the census blocks to be subdivided.

In the 2010 Census geography definitions, there were 36 states with one or more AIAN area on the spine. That number increased to 37 in 2020. Functionally, these new geographic units are treated as state equivalents in the TDA.  Meaning there are 87 (88 in 2020) state-equivalent geographies within the US used in TDA.

Creating the AIAN spine created a subtle, but important, modification to the way the least squares and rounder problems were solved. In states with AIAN areas on the spine, the TDA hierarchy branches at the state level. This means that at the state level, the AIAN total population and the non-AIAN total population are solved with privacy-loss budget. Only their sum is invariant. For states without AIAN areas on the spine total population is invariant and there is no need to estimate AIAN and non-AIAN components.


\subsection{Algorithmic improvements: Optimized spine}
\label{subsec:optspine}
\emph{Off-spine} entities are those geographic entities that are not directly estimated as part of the TDA spine. Using the standard census tabulation spine (US, state, county, tract, block group, block), many legally-defined off-spine entities are far away from the tabulation spine. The relevant distance measure is the \emph{minimum} number of on-spine entities added to or subtracted from one another in order to define the off-spine entity, which we call the \emph{off-spine distance}. There is an inverse relation between off-spine distance and count accuracy for off-spine entities. As an entity's off-spine distance increases, the accuracy of its estimated characteristics decreases due to the accumulation of independent random noise from the DP mechanism applied to each of these components. Our approach to optimizing the geographic spine is discussed in detail in \citet{AIANspine}. In this section, we give the provide the main details.     

In order to improve the accuracy of legally-defined off-spine entities, we designed an algorithm to re-configure the standard tabulation spine for use within the TDA. This was done in two ways.  First, we created \emph{custom block groups}, which are re-groupings of the tabulation blocks such that off-spine entities are closer to the spine and group quarters facilities of each type are isolated from their surrounding housing-unit only areas. Isolating group quarters in this manner improves the accuracy of estimated characteristics in both the group quarters and surrounding housing-unit only blocks. Second, we bypass parent geographies with only a single child, combining the privacy-loss budget for the child with that of the parent, which avoids wasting privacy-loss budget on redundant measurements and yields more accurate estimates. The accuracy gain occurs because the algorithm proceeds in a top-down fashion; hence, in the case of a single child geography all measurements must exactly equal those of the parent. Additional privacy-loss budget used on the child after the parental estimation is wasted. This optimization step effectively collapses the parent and child into a single node when the parent's fan-out is one and uses the combined privacy-loss budget of the parent and child to take DP measurements once for this node.  

We applied spine optimization after defining the AIAN spine. Hence, the AIAN areas and the balance of the state were optimized in separate branches of the hierarchy. To implement spine optimization, the Geography Division provided a list of the major sub-county governmental entities within each state. For portions of strong Minor Civil Division (MCD) states outside of AIAN areas, these entities were MCDs. In all other areas, the major sub-county entities were a combination of incorporated and Census-designated places. As with states, counties and tracts using the AIAN spine, these sub-state entities were divided into the part within an AIAN area and the balance of the sub-state entity. Call these sub-state entities the targeted off-spine entities. We minimized the off-spine distance for the targeted off-spine entities using a greedy algorithm that created custom block groups to reduce the number of on-spine entities required to build the targeted off-spine entities. The custom block groups replaced tabulation block groups within TDA; however, the official tabulation block groups are used for all reported statistics. That is, custom block groups are an implementation detail for TDA, not a replacement for tabulation block groups. The spine optimization algorithm could also have created custom tract groups, but this feature was not implemented for the redistricting data.

Table \ref{table:geo_entity_counts} shows selected statistics for geographic entities at each level of the heirarchy after applying the spine optimization to the AIAN spine respectively for 2010 and 2020. The number of entities for state, county, tract and block group differ from the number of tabulation tabulation geographic entities because the AIAN spine subdivides geographies with AIAN areas. Effectively, this moves tabulation counties, tracts and block groups slightly off spine in order to move AIAN areas, MCDs, and census places closer to the spine.  


\begin{table}[H]
\caption{Selected Statistics for Geographic Entity Counts over the Hierarchy}
\begin{tabular}{|p{4.2cm}| p{4.2cm} | p{4.2cm} |}
\hline
 & 2010 & 2020 \\
\hline
States with AIAN areas on the Spine (FIPS Codes)* & 01, 02, 04, 06, 08, 09, 12, 13, 15, 16, 19, 20, 22, 23, 25, 26, 27, 28,   30, 31, 32, 35, 36, 37, 38, 40, 41, 44, 45, 46, 48, 49, 51, 53, 55, 56 & 01, 02, 04, 06, 08, 09, 12, 13, 15, 16, 18, 19, 20, 22, 23, 25, 26, 27,   28, 30, 31, 32, 35, 36, 37, 38, 40, 41, 45, 46, 47, 48, 49, 51, 53, 55, 56 \\
\hline
Entities State Level  & 87       & 88      \\
\hline
Entities County Level & 3,488 & 3,496 \\
\hline
Entities Tract Level & 73,180 & 84,589 \\
\hline
Tabulation Block Groups & 217,740 & 239,780 \\
\hline
Custom Block Groups** & 402,020  & 409,548  \\
\hline
Tabulation Blocks & 11,078,297 & 8,132,968 \\
\hline
Tabulation Blocks with Occupied GQs & 112,956 & 126,723 \\
\hline
Tabulation Blocks with Housing Units \textgreater 0 & 6,379,963 & 5,879,049 \\
\hline
Tabulation Blocks with Potential Positive Population & 6,398,202 & 5,892,698 \\
\hline
\multicolumn{3}{|p{12.6cm}|}{\footnotesize{*Rhode Island has an AIAN area in 2020; however, it is not included in the list of states with AIAN areas on the spine because there are no housing units or occupied GQs in this AIAN area. Therefore, the blocks within this AIAN area are not included in the spine at all because they have a zero probability, a priori, of containing positive resident population.  

**Custom block groups (CBG) used within the TDA differ from tabulation block groups. These differences improve accuracy for off-spine geographies like places and minor civil divisions. The use of CBG for measurement and post-processing within TDA does not impact how the resulting data are tabulated. All 2020 Census data products will be tabulated using the official tabulation block groups as defined by the Census Bureau's Geography Division. }}\\
\hline
\end{tabular}
\label{table:geo_entity_counts}
\end{table}
{}

\subsection{Computing and technology requirements of the DAS}
\label{subsec:implementation_details}
The DAS requires a specialized environment to run given the size and complexity of the data being processed and the computing requirements of the algorithm. Here we describe some of the specifics of those computing and technology requirements. In Section \ref{sec:utility_experiments} we discuss the final production settings used in the TDA for the redistricting data. The DAS, as implemented for the redistricting data, is written in Python version 3.7 and uses Gurobi 9.1 for numerical optimization. The system is executed on Amazon Web Services (AWS) using Elastic Map Reduce (EMR) version 6.2 and Apache Spark version 3.0.1. The DAS is run on EMR clusters built from AWS r5.24xlarge virtual machines, in which each VM has 96 cores and 768GiB of RAM. The typical cluster size and spark configuration is: 1 EMR \emph{master} node and between 8 and 30 \emph{worker} (core plus task) nodes, with Spark deployed in client mode. The mix of Spark executors, memory settings, and cores-per-executor used by TDA varies based on workload, but for a typical redistricting data run, was often set to 24 cores per executor, 256 GiB driver memory, 64 GiB executor memory, and 128 GiB memory overhead.

The TDA implementation within the DAS consists of two primary parts: an underlying framework and the specialization of the framework for the implementation in 2020 Census.

The \textit{DAS framework} is a set of 22 Python source files (2,814 lines of code and 1,125 lines of unit tests) that provides a generic framework for implementing a batch data processing system. The framework provides for explicit modules that read a configuration file and then use the information in that file to initialize the remainder of the application,  read the input data (the \emph{reader}), perform the differential privacy computation (the \emph{engine}), validate the results of the computation (the \emph{validator}), and write out the results (the \emph{writer}). The actual reader, engine, validator, and writer are implemented by their own Python classes that are also specified in the configuration file. Each module can in turn read additional parameters from the configuration file. Thus, the behavior of the entire system is determined by the configuration file, making it easy to use the framework for both algorithmic experimentation as well as production by specifying different configuration files.

The TDA for the 2020 Census is written within the DAS Framework. It consists of 461 Python source files (78,428 lines of program files and 11,513 lines of unit tests). The system includes 10 readers, 20 writers, and multiple engines. Only one reader, writer, and engine are used for any given run. To document the provenance for each TDA run, the DAS framework automatically determines from the configuration file the specific Python files required and saves this information in XML and in Portable Document Format (PDF) in the \textit{DAS Certificate}. The system also creates an archive containing the complete source code run. All TDA run data and metadata are archived in the AWS Simple Storage Service (S3) with vintage tags.

All DP mechanisms implemented in TDA require a source of random numbers. We estimate that the 2020 Census will require roughly 90TB of random bytes to protect the person and housing unit tables. We generate random bits in the AWS environment using the Intel RDRAND instruction mixed with bits from/dev/urandom \citep{garfinkel:leclerc:2020}. Because we used an implementation of the  \citet{NEURIPS2020_b53b3a3d,canonne2020discrete} exact sampling approach for the discrete Gaussian mechanism, and allocated privacy-loss budgets using rational numbers, our implementation avoids the vulnerabilities often associated with floating-point random numbers as noted by \citet{mironov:2012}. 


\section{Tuning and Testing the DAS Over Time}
\label{sec:utility_experiments}

\subsection{Tuning the DAS and the release of demonstration data products}
The 2010 Census data represent the best source of real data for conducting utility experiments relevant to the 2020 census without using the 2020 data. Starting in October 2019, the Census Bureau released a series of demonstration data products based on the 2010 data \citep{2020_das_development}.  Each set consisted of privacy-protected microdata files (PPMF) generated from the microdata detail file (MDF) that supported at least the redistricting data tables. When released, each demonstration data product represented the current state of DAS development. 

It is not possible to tune a formally private disclosure avoidance system by directly using the confidential data it is designed to protect. Doing so causes the values of the tuning parameters---the privacy-loss budget allocations and the query strategies---to leak information about the confidential data. This is the reason that the TDA was not tuned using the 2020 Census Edited File. It is also the reason why quality assurance of the MDF is a difficult conceptual task. Historically, the disclosure avoidance methods used for the decennial census were tuned using data from the same census. Record swaps were rejected if the resulting swapped data did not meet particular accuracy standards. For historical reasons and because of errors associated with the microdata requirement, quality assurance for the 2020 DAS MDF does not strictly adhere to the tenets of formal privacy. The MDF receives human review examining it for errors that look unusual, surprising, or unacceptable compared to the 2020 Census Edited File. Notification of such errors may be escalated up the organizational hierarchy for review and decision-making. As noted in Section \ref{sec:conclusion}, there were no such errors flagged in the review of the production redistricting data; however, this situation highlights the need for formally private techniques for quality assurance.

The demonstration data product dated April 28, 2021 was the final pre-production release. This PPMF consisted of person- and housing unit-level microdata with two global privacy-loss budgets: $\rho=1.095$ ($1.05$ for persons and $0.045$ for housing units) or $\epsilon = 11.14$ and $\rho=0.1885$ ($0.185$ for persons and $0.0035$ for housing units) or $\epsilon = 4.36$.\footnote{The October 2019 and May 2020 demonstration data products were based on the full suite of tables planned for the redistricting, demographic and housing characteristics data. They were implemented using versions of TDA based on the discrete geometric mechanism with pure DP ($\delta=0$). In these releases, the global privacy-loss budget was determined using $\epsilon$-allocation---4 for persons, 2 for housing units---which sums to a total of 6. The September 2020, November 2020, April 2021, and production versions of TDA were implemented using versions of TDA based on the discrete Gaussian mechanism. For these releases, including the production release, privacy-loss budget was determined using $\rho$-allocation in the zCDP framework. When $\rho$ allocations are re-expressed in terms of $\epsilon$, the conversion uses the numerical approximation discussed in Section \ref{sec:mechanism:overview} evaluated a $\delta=10^{-10}$. In public summaries, we added $\epsilon$ for persons and housing units to facilitate comparisons to earlier releases. When using zCDP, a more accurate conversion of the global privacy-loss budget to $(\epsilon,\delta)$ form can be obtained by summing the $\rho$-allocation of persons and housing units, then converting to $\epsilon$ at $\delta=10^{-10}$. For the earlier releases, global privacy-loss budgets were $\rho = 0.1885$ (November 2020), $\rho = 0.1885$ (September 2020), $\epsilon =6$ (May 2020), and $\epsilon =6$ (October 2019). As noted in Section \ref{sec:mechanism:overview}, zCDP is summarized by the value of $\rho$, which represents a continuum of $(\epsilon,\delta)$ pairs.}

Table \ref{table:rho_allocation_geography} gives the proportion of the $\rho$-budget per geographic level for both persons and housing units in the April 2021 release. For the estimation of the persons tables, the block-level queries received approximately half of the total privacy-loss budget. For housing units, block-group-level queries received the predominant share of the privacy-loss budget.        

\begin{table}[H]
\caption{Per Geographic Level $\rho$ Allocation Proportions for Persons and Housing Units, April 2021 Release}
\begin{tabular}{|l|r|r|}
\hline
Geographic Level & Person $\rho$ Proportions & Housing Units $\rho$ Proportions \\
\hline
US                     & 51/1024 & 1/1024                            \\
State                  & 153/1024  & 1/1024                          \\
County                 & 78/1024     & 18/1024                       \\
Tract                  & 51/1024    & 75/1024                      \\
Custom Block Group*    & 172/1024   & 906/1024                      \\
Block                  & 519/1024 & 23/1024         \\
\hline
\multicolumn{3}{|p{12cm}|}{\footnotesize{*The custom block groups used within TDA differ from tabulation block groups.}}\\
\hline
\end{tabular} \label{table:rho_allocation_geography}
\end{table}

Tables \ref{table:rho_allocation_persons} and \ref{table:rho_allocation_household} show the proportions of  the total $\rho$-budget allocations per query within each geographic level for the persons and households tables respectively. These should be interpreted in conjunction with Table \ref{table:rho_allocation_geography} in order to calculate the proportion of the total budget which a level-specific query represents.  For example, the County TOTAL query represents $342/1024 \times 78/1024$, or approximately $2.5\%$ of the total privacy-loss budget for persons.\footnote{In this section, the use of query names in all capitals, like TOTAL, refers to the definitions in Tables \ref{table:rho_allocation_persons} and \ref{table:rho_allocation_household}. DETAILED means HHGQ $\times$ VOTINGAGE $\times$ HISPANIC $\times$ CENRACE.}

The set of queries and the values used per geographic level and per query were determined by a set of internal experiments that took place over a period of approximately five months, beginning in December 2020 and ending in early April 2021. These internal experiments estimated the minimum global privacy-loss budget and its allocation to specific queries in order to ensure fitness-for-use in the redistricting application. See \citet{wright:irimata:2021} for details. These allocations were reflected in the April 2021 demonstration data product. Subsequent analysis by internal and external stakeholders addressed accuracy criteria for other, primarily demographic, uses. These allocations are reflected in the final production settings.   

The initial, systematic tuning phase specified a quantitative metric targeted at the redistricting use case: for at least 95\% of geographic units, the largest of a selected set of CENRACE $\times$ HISPANIC populations, corresponding to the major race and ethnicity groups in Statistical Policy Directive 15, should not change by more than $\pm 5$ percentage points when expressed as a percentage of the geographic unit's total population compared to its enumerated percentage of the total population in that geographic unit. Due to computational constraints of creating many independent samples, geographic variability was used for evaluating the 95\% criterion. While not a direct substitute for evaluating variability across many runs, we have observed that these metrics are very consistent from run to run, making them useful in their own right.  
 

The TDA has error distributions that are data-dependent in a fashion that cannot be expressed in closed form and depends on confidential data values that cannot be published. Therefore, we performed the fitting to the redistricting use-case iteratively rather than using the ex ante properties of zCDP. We used six candidate query strategies (queries for which formally private measurements are taken), created by defining two dimensions. First, query strategies varied by either using measurements that closely conformed to those in the publication specifications for the redistricting data product, or by collapsing/excluding certain marginal queries to try to take advantage of dependence among queries. Second, because both formal privacy and certain measures of error in TDA's estimation are especially sensitive to cells with small counts, we assigned budgets to queries in one of three ways: proportional to the square of query cardinality, inversely proportional to the square of query cardinality, or uniformly assigned. Crossing these two dimensions generated six candidate strategies. Optimization passes were ordered heuristically, with queries assigned large relative budgets typically isolated in their own passes. In the strategies with almost all privacy-loss budget assigned to the DETAILED query (proportional to the square of query cardinality), in particular, queries were optimized simultaneously, in a single pass.

Strategies that assigned weight according to the inverse-square of query cardinality were quickly ruled out as noncompetitive. For the remaining four strategies, we iterated experimentally, geographic level by geographic level starting at the top with the US. First, using binary search, for each geographic level, we identified the approximate assignment of $\rho$ to that geographic level's queries needed to satisfy the redistricting criterion, when the other geographic levels were assigned infinite budget. We then proceeded to account for dependence between geographic levels. We set US and state to their single-level fitted $\rho$ values, and assigned the remaining levels infinite budget. We then increased the US and state $\rho$ budgets proportionally until the criterion was satisfied. Next, we set the county budget to its single-level fitted $\rho$, and further increased US, state, and county budgets proportionally until the criterion was satisfied. We proceeded down the hierarchy until all geographic levels had been assigned a $\rho$ value. This procedure led to the initial identification of a total $\rho$ of $1.095$ as satisfying the redistricting criterion. While it may not be obvious from this description, because the redistricting criterion requires statistics for off-spine entities, every experimental run of TDA processed data down to the block level. This is the reason tuning took several months to accomplish.

The April 2021 demonstration data product used the $L_2$ and $L_1$ multi-pass components. Specifically, at the US level, we used a single pass for both the $L_2$ and $L_1$ phases with all the queries in Table \ref{table:rho_allocation_persons} used for both phases. The state, county, tract, and block group geographic levels used two passes each.  For both the $L_2$ and $L_1$ phases, the first pass consisted of the TOTAL query used both as the noisy estimate and constraint. The second pass consisted of the remaining queries in  Table \ref{table:rho_allocation_persons} for both phases. Finally, at the block level, a single pass was used for both phases with all the queries from Table \ref{table:rho_allocation_persons} for both phases. The choice of the number of passes and the specific queries used in each pass were tested as part of the redistricting experiments, though we do not show any comparisons here with alternatives. 

\subsection{Redistricting data production settings} \label{subsec:prod_settings}

\begin{table}[H] 
\caption{Per Query $\rho$ Allocation Proportions by Geographic Level for Persons, April 2021 Release}
\begin{tabular}{|p{4.2cm} |p{1.25cm}p{1.5cm}p{1.5cm}p{1.25cm}p{1.5cm}p{1.5cm}|}
\hline
Query & \multicolumn{6}{c|}{Per Query $\rho$ Allocation Proportions by Geographic Level}\\
\hline
& US & State      & County   & Tract    & CBG* & Block    \\
\hline
TOTAL (1 cell)  &          & 678/1024** & 342/1024 & 1/1024   & 572/1024                 & 1/1024   \\
CENRACE (63 cells)         & 2/1024   & 1/1024     & 1/1024   & 2/1024   & 1/1024                   & 2/1024   \\
HISPANIC (2 cells)         & 1/1024   & 1/1024     & 1/1024   & 1/1024   & 1/1024                   & 1/1024   \\
VOTINGAGE (2 cells)        & 1/1024   & 1/1024     & 1/1024   & 1/1024   & 1/1024                   & 1/1024   \\
HHINSTLEVELS (3 cells)     & 1/1024   & 1/1024     & 1/1024   & 1/1024   & 1/1024                   & 1/1024   \\
HHGQ (8 cells)             & 1/1024   & 1/1024     & 1/1024   & 1/1024   & 1/1024                   & 1/1024   \\
HISPANIC$\times$CENRACE (126 cells)     & 5/1024   & 2/1024     & 3/1024   & 5/1024   & 3/1024                   & 5/1024   \\
VOTINGAGE$\times$CENRACE (126 cells)    & 5/1024   & 2/1024     & 3/1024   & 5/1024   & 3/1024                   & 5/1024   \\
VOTINGAGE$\times$HISPANIC (4 cells)     & 1/1024   & 1/1024     & 1/1024   & 1/1024   & 1/1024                   & 1/1024   \\
VOTINGAGE$\times$HISPANIC $\times$CENRACE (252 cells)    & 17/1024  & 6/1024     & 11/1024  & 17/1024  & 8/1024                   & 17/1024  \\
HHGQ$\times$VOTINGAGE $\times$HISPANIC $\times$CENRACE (2,016 cells) & 990/1024 & 330/1024   & 659/1024 & 989/1024 & 432/1024                 & 989/1024 \\
\hline
\multicolumn{7}{|p{14.5cm}|}{\footnotesize{*Custom block groups (CBG) differ from tabulation block groups and are only used by the TDA.}}\\
\multicolumn{7}{|p{14.5cm}|}{\footnotesize{**The TOTAL query (total population) is held invariant at the state level. This $\rho$ allocation assigned to TOTAL at the state level is the amount assigned to the state-level queries for the total population of all AIAN tribal areas within the state and for the total population of the remainder of the state, for the 36 states that include AIAN tribal areas.}}\\
\hline
\end{tabular} \label{table:rho_allocation_persons}
\end{table}

\begin{table}[h!] 
\caption{Per Query $\rho$ Allocation Proportions by Geographic Level for Housing Units, April 2021 Release}
\begin{tabular}{|p{4.2cm} |p{1.25cm}p{1.5cm}p{1.5cm}p{1.25cm}p{1.5cm}p{1.5cm}|}
\hline
Query & \multicolumn{6}{l|}{Per Query $\rho$ Allocation Proportions by Geographic Level}\\
\hline
& US & State      & County   & Tract    & CBG* & Block    \\
\hline
Occupancy Status (2 cells) & 1 & 1 & 1 & 1 &1 & 1\\
\hline
\multicolumn{7}{|p{14.5cm}|}{\footnotesize{*Custom block groups (CBG) differ from tabulation block groups and are only used by the TDA.}}\\
\hline
\end{tabular} \label{table:rho_allocation_household}
\end{table}

For production run of the 2020 Census Redistricting Data (P.L. 94-171) Summary File, a total privacy-loss budget of $\rho = 2.63$ was used with $\rho=2.56$ used for person tables and $\rho=0.07$ for housing-units tables, considerable increases from the larger of the April 2021 settings. Production settings of the geographic-level $\rho$ proportions are given in Table \ref{table:rho_allocation_geography_production} and the query $\rho$ proportions by geographic level are given in Tables \ref{table:rho_allocation_persons_production} and \ref{table:rho_allocation_household_production} for persons and housing units, respectively. Note, in particular, the shift in the $\rho$ allocation for the person-level data away from the block level and onto the custom block group and tract level. This shift allowed the location confidentiality protection of respondents, which is primarily controlled by the allocation to the block level, to improve while meeting even tighter accuracy targets for HISPANIC $\times$ CENRACE at the tract and block-group level. This reallocation relied upon the efficiency gain from the spine optimization. The production settings used the same estimation strategy ($L_2$ and $L_1$ multi-pass components) and multi-pass orderings as the April 2021 release. 

\begin{table}[h!]
\caption{Per Geographic Level $\rho$ Allocation Proportions for Persons and Housing Units, Production Settings}
\begin{tabular}{|l|l|l|}
\hline
Geographic Level & Person $\rho$ Proportions & Housing Units $\rho$ Proportions \\
\hline
US                     & 104/4099 & 1/205                            \\
State                  & 1440/4099  & 1/205                          \\
County                 & 447/4099     & 7/82                       \\
Tract                  & 687/4099    & 364/1025                      \\
Custom Block Group* & 1256/4099   & 1759/4100                      \\
Block                  & 165/4099 & 99/820         \\
\hline
\multicolumn{3}{|p{12.5cm}|}{\footnotesize{*Custom block groups differ from tabulation block groups and are only used by the TDA.}}\\
\hline
\end{tabular} \label{table:rho_allocation_geography_production}
\end{table}

\begin{table}[H] 
\caption{Per Query $\rho$ Allocation Proportions by Geographic Level for Persons, Production Settings}
\begin{tabular}{|p{4.2cm} |p{1.25cm}p{1.5cm}p{1.5cm}p{1.5cm}p{1.5cm}p{1.5cm}|}
\hline
Query & \multicolumn{6}{c|}{Per Query $\rho$ Allocation Proportions by Geographic Level}\\
\hline
& US & State      & County   & Tract    & CBG* & Block    \\
\hline
TOTAL (1 cell)  &          & 3773/4097 & 3126/4097 & 1567/4102   & 1705/4099                 & 5/4097   \\
CENRACE (63 cells)        & 52/4097   & 6/4097     & 10/4097   & 4/2051   & 3/4099                   & 9/4097   \\
HISPANIC (2 cells)        & 26/4097   & 6/4097     & 10/4097   & 5/4102   & 3/4099                   & 5/4097   \\
VOTINGAGE (2 cells)        & 26/4097   & 6/4097     & 10/4097   & 5/4102   & 3/4099                   & 5/4097   \\
HHINSTLEVELS (3 cells)    & 26/4097   & 6/4097     & 10/4097   & 5/4102   & 3/4099                   & 5/4097   \\
HHGQ (8 cells)            & 26/4097   & 6/4097     & 10/4097   & 5/4102   & 3/4099                  & 5/4097   \\
HISPANIC$\times$CENRACE (126 cells)    & 130/4097   & 12/4097     & 28/4097   & 1933/4102   & 1055/4099                  & 21/4097  \\
VOTINGAGE$\times$CENRACE (126 cells)  & 130/4097   & 12/4097     & 28/4097   & 10/2051   & 9/4099                   & 21/4097   \\
VOTINGAGE$\times$HISPANIC (4 cells)   & 26/4097   & 6/4097     & 10/4097   & 5/4102   & 3/4099                   & 5/4097  \\
VOTINGAGE$\times$HISPANIC $\times$CENRACE (252 cells)  & 26/241  & 2/241     & 101/4097  & 67/4102  & 24/4099                   & 71/4097  \\
HHGQ$\times$VOTINGAGE $\times$HISPANIC$\times$CENRACE (2,016 cells) & 189/241 & 230/4097   & 754/4097 & 241/2051 & 1288/4099                & 3945/4097 \\
\hline
\multicolumn{7}{|p{14.5cm}|}{\footnotesize{*Custom block groups (CBG) differ from tabulation block groups and are only used by the TDA.}}\\
\hline
\end{tabular} \label{table:rho_allocation_persons_production}
\end{table}

\begin{table}[H] 
\caption{Per Query $\rho$ Allocation Proportions by Geographic Level for Housing Units, Production Settings}
\begin{tabular}{|p{4.2cm} |p{1.25cm}p{1.5cm}p{1.5cm}p{1.25cm}p{1.5cm}p{1.5cm}|}
\hline
Query & \multicolumn{6}{c|}{Per Query $\rho$ Allocation Proportions by Geographic Level}\\
\hline
& US & State      & County   & Tract    & CBG* & Block    \\
\hline
Occupancy Status (2 cells) & 1 & 1 & 1 & 1 &1 & 1\\
\hline
\multicolumn{7}{|p{14.5cm}|}{\footnotesize{*Custom block groups (CBG) differ from tabulation block groups and are only used by the TDA.}}\\
\hline
\end{tabular} \label{table:rho_allocation_household_production}
\end{table}

\subsection{Accuracy comparisons over time}

There are hundreds, if not thousands, of possible accuracy metrics that could be shown in order to assess the utility of the TDA for the redistricting and general demographic use cases. Here, we show just a few. The complete set of detailed statistical metrics for the redistricting use case can be found in \citet{wright:irimata:2021}. We summarize here. First, for Census-designated places and target off-spine entities, including off-spine entities in both the AIAN and non-AIAN branches of the hierarchy, the production settings of TDA achieve the $\pm 5$ percentage point target at least 95\% of the time in the population interval 200 to 249 (far below the original target of 500). For block groups the target is achieved in the interval 450 to 499 (again below the original target of 500). At the production settings, the redistricting data can be used reliably for Congressional, state legislative districts, and every one of the 73 redistricting use cases that the Department of Justice (DoJ) Voting Section supplied---a sample that included many low population legislative areas. Reliability means the difference between the ratio of the largest demographic group population to the total population of an area based on the TDA and the corresponding ratio of the largest demographic group population to the total population of the area based on the official 2010 Census Redistricting Data (P.L. 94-171) Summary File is less than or equal to 5 percentage points at least 95\% of the time. Furthermore, for specific redistricting plans, majority-minority districts near the 50\% threshold differ by only a handful of persons and are as likely to flip over as under 50\%. The Voting Rights Act Section 2 analysis for the vast majority of Congressional, state legislative, and DoJ-provided plans is unchanged compared to the results obtained from the official 2010 Census Redistricting Data (P.L. 94-171) Summary File. The complete detailed metrics for general demographic uses cases for all demonstration versions of the DAS and for the production settings run on the 2010 Census data can be found at \citet{2020_das_development}.

Figures \ref{fig:improv_countyunder1000} through \ref{fig:improv_TractHisp} are good representations of the improvement in the TDA over time including those found in the tuned April 2021 release and the final production settings. The figures illustrate four different accuracy metrics; however, it is important to remember, that the statistical optimization problems in Section \ref{sec:estimation} focus on $L_1$ and $L_2$ count accuracy. Mean absolute error is the $L_1$ error in the indicated statistic for the indicated geography averaged for that statistic over all geographic units in the figure's universe.
We note that in the figures, mean errors are calculated over geographic units (e.g., counties in Figure \ref{fig:improv_countyunder1000}) for a single realization of the TDA. The improvements in accuracy over time in these figures were due to improvements within the algorithm, tuning strategies, and increases of the privacy-loss budget.
In comparing the different versions of TDA in a single figure, the November 2020 and April 2021 $(\rho = 0.1885)$  represent algorithmic improvements for the redistricting data using the same total privacy-loss budget. The April 2021 $(\rho = 0.1885)$ compared to April 2021 $(\rho = 1.095)$ show the effects of increasing the privacy-loss budget for the same algorithms and budget allocations. It is also apparent from comparing the graphs that every query does not improve as the experiments progressed. For example, the mean absolute error of the total population in all incorporated places increased between the November 2020 and April 2021 runs at $\rho=0.1885$ (see Figure \ref{fig:improv_PlaceMAPE}) but returned to its November 2020 value at the increased privacy-loss budget of $\rho=1.095$. Such movements happened because algorithmic improvement targeted many metrics allocating a fixed privacy-loss budget causing tradeoffs whereas increasing the privacy-loss budget using the same algorithms increased accuracy across many different metrics.

\begin{figure}[H]
\caption{Mean Absolute Error of the County Total Population among the Least Populous Counties (Population Under 1,000) by Demonstration Data Product Vintage}
\centering
\includegraphics[width=\textwidth]{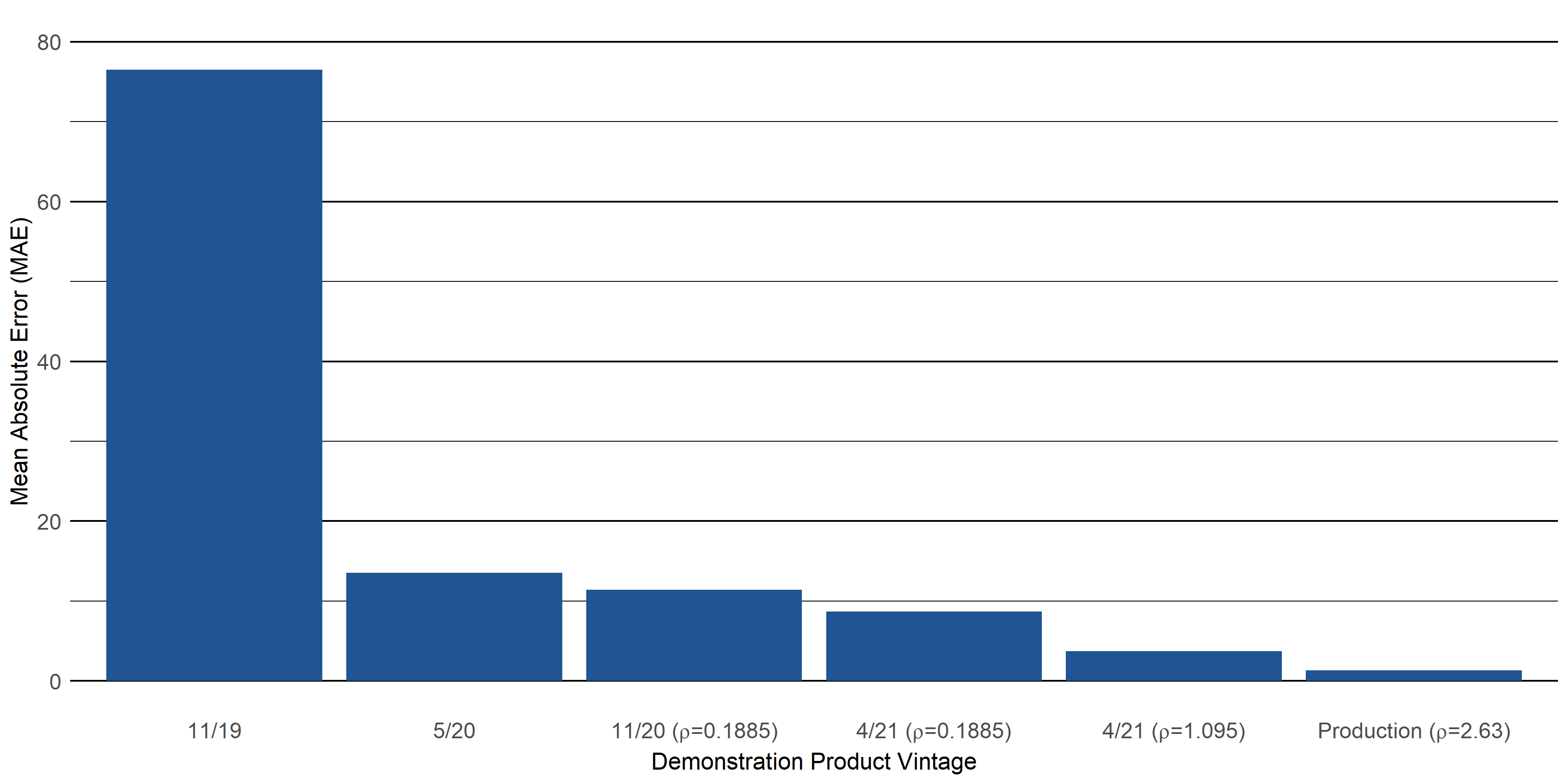}
\label{fig:improv_countyunder1000}
\end{figure}

\begin{figure}[H]
\caption{Mean Absolute Error of the Total Population for Federal American Indian Reservation/Off-Reservation Trust Lands by Demonstration Data Product Vintage}
\centering
\includegraphics[width=\textwidth]{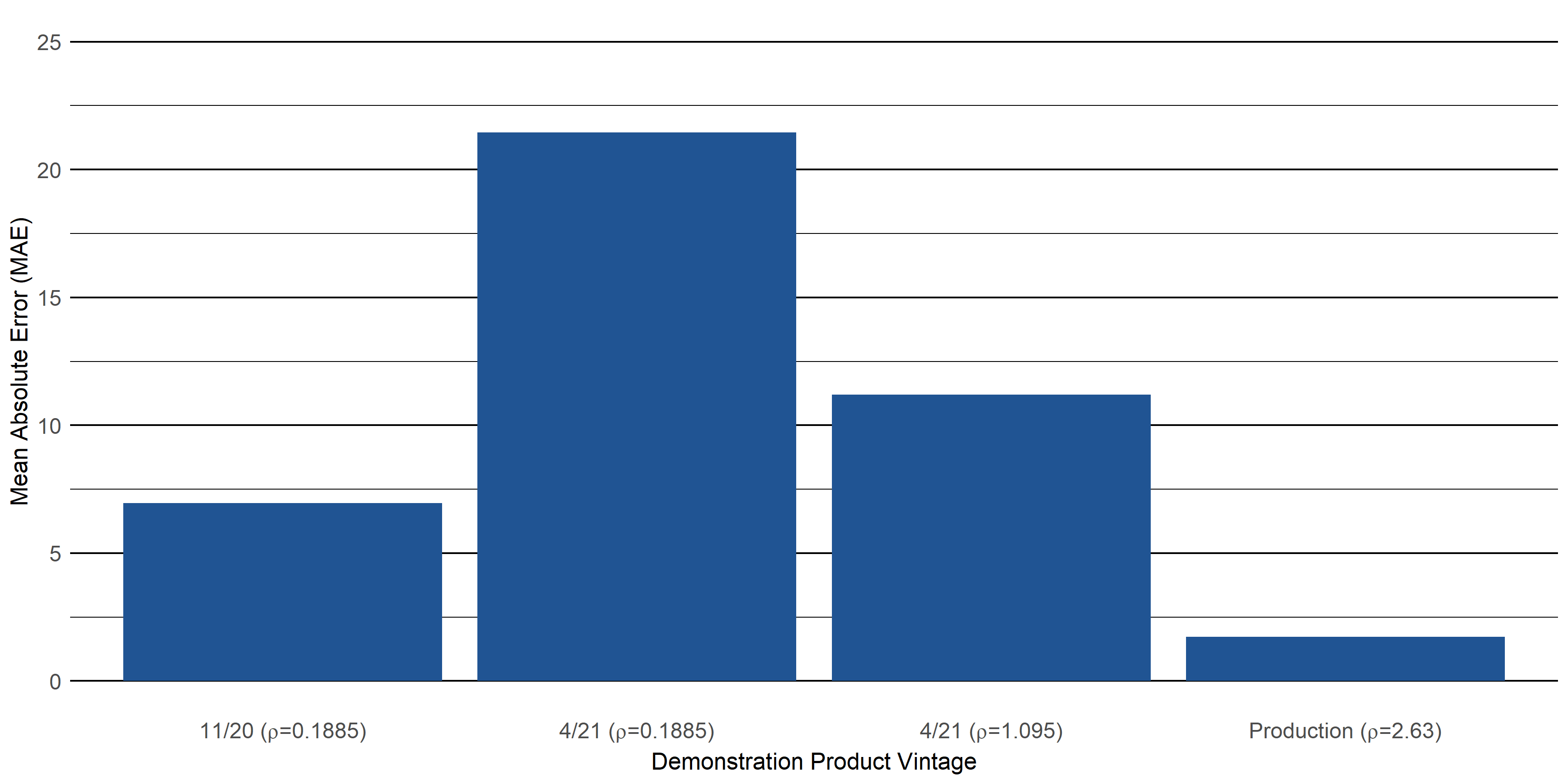}
\label{fig:improv_fedairmae}
\end{figure}

\begin{figure}[H]
\caption{Mean Absolute Error of the Total Population among All Incorporated Places by Demonstration Data Product Vintage}
\centering
\includegraphics[width=\textwidth]{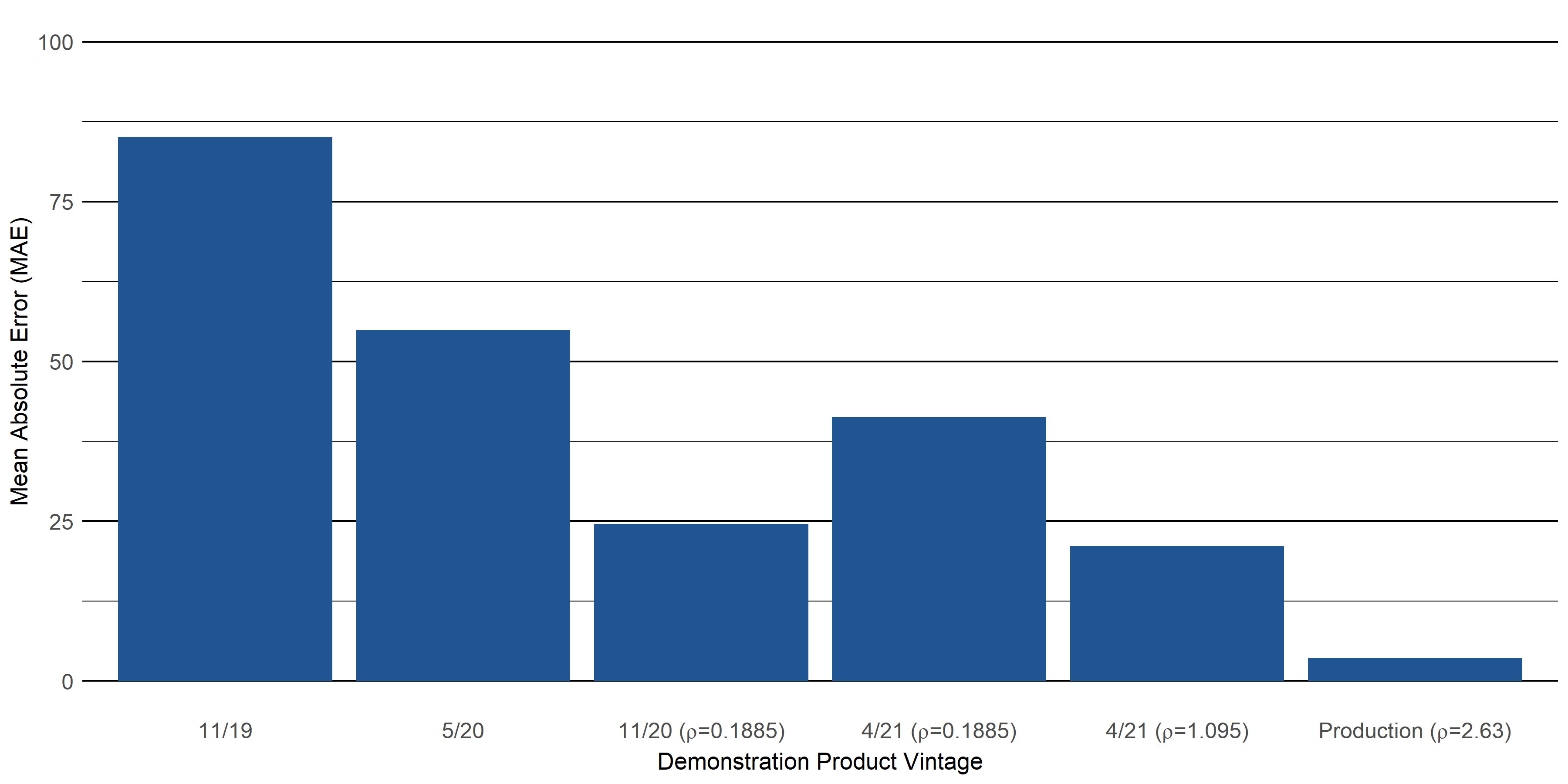}
\label{fig:improv_PlaceMAPE}
\end{figure}

\begin{figure}[H]
\caption{Mean Absolute Error of the Total Population among Tracts for Hispanic x Race Alone Populations by Demonstration Data Product Vintage}
\centering
\includegraphics[width=\textwidth]{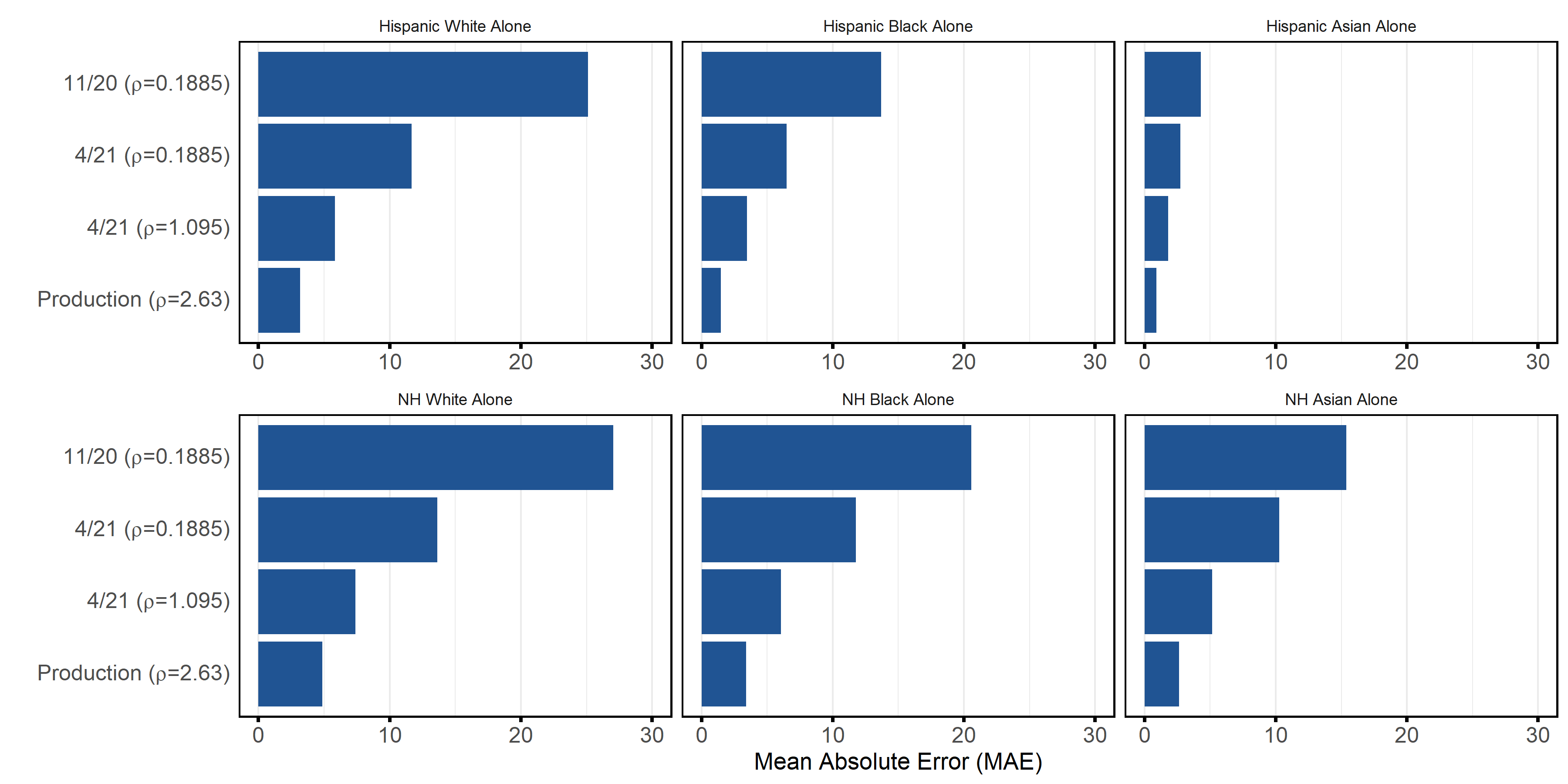}
\label{fig:improv_TractHisp}
\end{figure}

\section{Utility Experiments}
\label{sec:utility_experiments2}

To further study the effect of certain design choices on the utility of the results, we conducted an additional series of experiments after the data release using the 2010 redistricting data. We considered the final production settings (see section \ref{subsec:prod_settings} for full details) as a baseline and completed experimental runs to study the marginal effects of 1) randomness 2) increasing $\rho$, and 3) the multi-pass least squares estimator. These experiments consisted of a total of 4 independent runs of the TDA with 3 different settings. In each case, only the person-level data were used (and not housing units). The first two runs used the same production settings ($\rho = 2.56)$, and we refer to these runs as \emph{baseline1} and \emph{baseline2}. \emph{Baseline1} is the output from the June 8, 2021 PPMF exactly as published whereas \emph{baseline2} uses fresh randomness but is otherwise identical. The third run increased the value of $\rho$ by $10\%$ to $\rho = 2.81$.  We refer to this run as \emph{inc\_rho}. The fourth run removed the $L_2$ multi-pass component used at the state, county, tract, and block-group levels, instead using a single $L_2$ pass. We refer to this run as \emph{l2\_onepass}.

Tables \ref{experiment_tb_total_L1_error1} and \ref{experiment_tb_total_L1_error2} show the absolute error across selected queries averaged across the geographic level of interest. The absolute error for a given query can be represented $\sum_{i=1}^{m}|\hat{\theta_i} - \theta_i|$ where $i=1, \ldots, m$ are the levels in the query\footnote{e.g., $1$ level for TOTAL, $2$ levels for VOTINGAGE, 63 for CENRACE, and $2016$ for DETAILED} with $\hat{\theta_i}$ and $\theta_i$ representing the TDA estimate and the CEF value respectively. In Table \ref{experiment_tb_total_L1_error1} we first observe that there is no error in the TOTAL query at the US and state levels for any of the runs. This is because those values are held invariant. In general, the average error tends to decrease when moving from the US geographic level to the block level, especially for queries with many levels. This effect is largely due to the increasing sparsity of those queries at the lower levels of geography.

When comparing the different runs in Tables \ref{experiment_tb_total_L1_error1} and \ref{experiment_tb_total_L1_error2}, we see that the average errors for \emph{baseline1} and \emph{baseline2} are very consistent with each other, especially at the lower levels of geography where we are averaging over a large number of geographic units. While ideally, average errors could be established by averaging over many runs of the TDA, a single run is extremely computationally expensive. These results show that in evaluating this type of average errors, it is not necessary to look across many runs. We note however, that errors averaged across geographies should not be used as a direct substitute for the run-to-run variability in an individual geography's query error. This is because the underlying error distributions are not necessarily identical across geographic units. We are actively researching better methods for summarizing the error distributions for the published TDA statistics. 

The effect of increasing the overall $\rho$ value by ten percent on the metrics in Tables \ref{experiment_tb_total_L1_error1} and \ref{experiment_tb_total_L1_error2} is noticeable, but small (roughly a $3\%$ reduction in error). We see that using multiple passes instead of one pass consistently improved the errors across many queries. 

\begin{table}[H] 
\caption{Absolute Error in Total, Voting-age, and Hispanic Queries Averaged Across Geographic Units by Geographic Level  }
\label{experiment_tb_total_L1_error1}
\begin{tabular}{|ll|rrrr|}
\hline
Geographic Level & Query & baseline1 & baseline2 & inc\_rho & l2\_onepass  \\
\hline
US & TOTAL & 0.00 & 0.00 & 0.00 & 0.00 \\
State & TOTAL & 0.00 & 0.00 & 0.00 & 0.00  \\
County & TOTAL & 1.75 & 1.75 & 1.71 & 1.73  \\
Tract & TOTAL & 1.94 & 1.93 & 1.84 & 1.88  \\
Block Group & TOTAL & 16.00 & 16.01 & 15.42 & 15.99  \\
Block & TOTAL & 4.85 & 4.85 & 4.69 & 4.85  \\
\hline
US & VOTINGAGE & 70.00 & 92.00 & 64.00 & 112.00  \\
State & VOTINGAGE & 29.69 & 25.57 & 24.08 & 27.14 \\
County & VOTINGAGE & 19.67 & 19.82 & 19.04 & 19.62  \\
Tract & VOTINGAGE & 15.00 & 15.10 & 14.47 & 15.10 \\
Block Group & VOTINGAGE & 21.35 & 21.40 & 20.57 & 21.34  \\
Block & VOTINGAGE & 6.23 & 6.23 & 6.01 & 6.23  \\
\hline
US & HISPANIC & 28.00 & 24.00 & 18.00 & 14.00  \\
State & HISPANIC & 24.24 & 28.00 & 27.25 & 27.02  \\
County & HISPANIC & 20.29 & 20.81 & 19.48 & 20.13  \\
Tract & HISPANIC & 8.11 & 8.12 & 7.81 & 8.06  \\
Block Group & HISPANIC & 20.24 & 20.27 & 19.56 & 20.21 \\
Block & HISPANIC & 5.82 & 5.82 & 5.63 & 5.82 \\
\hline
\end{tabular}
\end{table}

\begin{table}[H] 
\caption{Absolute Error in Queries Averaged Across Geographic Units by Geographic Level  }
\label{experiment_tb_total_L1_error2}
\begin{tabular}{|ll|rrrr|}
\hline
Geographic Level & Query  & baseline1 & baseline2 & inc\_rho & l2\_onepass  \\
\hline
US & CENRACE & 698.00 & 842.00 & 666.00 & 666.00  \\
State & CENRACE & 393.45 & 392.55 & 384.75 & 399.76  \\
County & CENRACE & 126.42 & 126.45 & 122.00 & 127.25  \\
Tract & CENRACE & 35.04 & 35.02 & 33.83 & 35.04 \\
Block Group & CENRACE & 41.36 & 41.39 & 40.17 & 41.39  \\
Block & CENRACE & 8.04 & 8.05 & 7.80 & 8.04 \\
\hline
US & HISPANICxCENRACE & 882.00 & 1054.00 & 930.00 & 890.00  \\
State & HISPANICxCENRACE & 562.00 & 558.31 & 552.75 & 563.53  \\
County & HISPANICxCENRACE & 164.59 & 164.69 & 159.38 & 165.03  \\
Tract & HISPANICxCENRACE & 43.64 & 43.65 & 42.15 & 43.67  \\
Block Group & HISPANICxCENRACE & 49.48 & 49.50 & 48.03 & 49.45  \\
Block & HISPANICxCENRACE & 8.94 & 8.94 & 8.67 & 8.94  \\
\hline
US & DETAILED* & 4456.00 & 4332.00 & 4176.00 & 4212.00 \\
State & DETAILED & 1568.27 & 1573.18 & 1507.14 & 1572.35  \\
County & DETAILED & 294.45 & 293.71 & 284.47 & 294.50  \\
Tract & DETAILED & 95.33 & 95.29 & 91.96 & 95.29  \\
Block Group & DETAILED & 70.18 & 70.18 & 68.03 & 70.14  \\
Block & DETAILED & 10.93 & 10.93 & 10.59 & 10.93  \\
\hline
\end{tabular}
*DETAILED is shorthand for HHGQ×VOTINGAGE×HISPANIC×CENRACE
\end{table}

Table \ref{table_dist_error_county} shows the error distribution of the TOTAL query for each of the experimental runs at the county level grouping by the underlying CEF total population. The table shows both the mean absolute error as well as specific signed error quantiles for each of the runs and total population groupings. In general at the county level, there is a small increase in the average error as the total population increases. The error distributions are fairly symmetric and centered around zero for all but the largest of counties. In those counties there appears to be slightly larger likelihood of having a negative error; however, the errors are quite small relative to the overall population of the county. Comparing the different runs, there is not much variability in these metrics between our baseline runs or between the baseline runs and \emph{inc\_rho} or \emph{l2\_onepass}.

Considering comparable metrics at the block level (Table \ref{table_dist_error_block}), we see that overall, the differences between the error distributions by total population are more pronounced. Blocks with small populations are on average slightly overestimated while blocks with larger populations are on average slightly underestimated by the TDA. This is directly tied to the non-negative query requirement and much more pronounced at the block level because there are many blocks with very small total populations. Comparing the different runs, we see that \emph{inc\_rho} had consistently lower errors than the others at this geographic level.   

\begin{table}[H] 
\caption{Error Distribution in Total Population Query Across Counties by CEF population}
\label{table_dist_error_county}
\begin{tabular}{|c|r|l|rrrr|}
\hline
Total Population & Count & Metric & baseline1 & baseline2 & inc\_rho & l2\_onepass   \\
\hline
\multirow{8}{*}{0 to 1,000} & \multirow{8}{*}{35} & mean L1 & 1.314 & 0.943 & 1.314 & 1.343\\
 &  & q(0.005) & -3 & -2 & -3 & -3\\
 &  & q(0.025) & -3 & -2 & -3 & -3  \\
 &  & q(0.25) & -1 & -1 & -1 & 0 \\
 &  & q(0.5) & 0 & 0 & 1 & 0   \\
 &  & q(0.75) & 1 & 1 & 1 & 2 \\
 &  & q(0.975) & 3 & 3 & 3 & 4   \\
 &  & q(0.995) & 3 & 3 & 3 & 4   \\
 \hline
\multirow{8}{*}{1,001 to 9,999} & \multirow{8}{*}{663} & mean L1 & 1.611 & 1.528 & 1.596 & 1.558  \\
 &  & q(0.005) & -5 & -5 & -6 & -5 \\
 &  & q(0.025)  & -4 & -4 & -4 & -4   \\
 &  & q(0.25)  & -1 & -1 & -1 & -1 \\
 &  & q(0.5) & 0 & 0 & 0 & 0 \\
 &  & q(0.75) & 1 & 1 & 1 & 2  \\
 &  & q(0.975) & 4 & 5 & 4 & 4   \\
 &  & q(0.995) & 6 & 6 & 5 & 6  \\
 \hline
\multirow{8}{*}{10k to 99,999} & \multirow{8}{*}{1867} & mean L1 & 1.77 & 1.819 & 1.736 & 1.774 \\
 &  & q(0.005) & -6 & -6 & -6 & -6  \\
 &  & q(0.025)  & -4 & -4 & -4 & -4  \\
 &  & q(0.25)  & -1 & -2 & -2 & -2  \\
 &  & q(0.5) & 0 & 0 & 0 & 0  \\
 &  & q(0.75) & 2 & 2 & 1 & 2  \\
 &  & q(0.975) & 5 & 5 & 4 & 4\\
 &  & q(0.995) & 6 & 6 & 6 & 6 \\
 \hline
\multirow{8}{*}{100k to 999,999} & \multirow{8}{*}{539} & mean L1 & 1.883 & 1.818 & 1.753 & 1.761 \\
 &  & q(0.005) & -8 & -7 & -6 & -6 \\
 &  & q(0.025)  & -5 & -5 & -4 & -5  \\
 &  & q(0.25)  & -2 & -2 & -2 & -2  \\
 &  & q(0.5) & 0 & 0 & 0 & 0   \\
 &  & q(0.75) & 1 & 1 & 1 & 1  \\
 &  & q(0.975) & 4 & 5 & 4 & 4   \\
 &  & q(0.995) & 5 & 6 & 5 & 5  \\
 \hline
\multirow{8}{*}{1,000k+} & \multirow{8}{*}{39} & mean L1 & 2.026 & 1.923 & 2.256 & 2.333 \\
 &  & q(0.005) & -9 & -10 & -7 & -9   \\
 &  & q(0.025)  & -9 & -10 & -7 & -9   \\
 &  & q(0.25)  & -3 & -2 & -2 & -3  \\
 &  & q(0.5) & -1 & -1 & 0 & -2  \\
 &  & q(0.75) & 0 & 1 & 1 & 0   \\
 &  & q(0.975) & 4 & 4 & 7 & 4  \\
 &  & q(0.995) & 4 & 4 & 7 & 4 \\
 \hline
\end{tabular}

{\small{ Note that q(x) represents the xth quartile.} }
\end{table}

\begin{table}[H] 
\caption{Error Distribution in Total Population Query Across Blocks by CEF population}
\label{table_dist_error_block}
\begin{tabular}{|c|r|l|rrrr|}
\hline
Total Population & Count & metric & baseline1 & baseline2 & inc\_rho & l2\_onepass \\
\hline
\multirow{8}{*}{0} & \multirow{8}{*}{191,175} & l1\_mean & 2.746 & 2.745 & 2.672 & 2.75 \\
 &  & q(0.005) & 0 & 0 & 0 & 0 \\
 &  & q(0.025) & 0 & 0 & 0 & 0 \\
 &  & q(0.25) & 0 & 0 & 0 & 0 \\
 &  & q(0.5) & 2 & 2 & 2 & 2 \\
 &  & q(0.75) & 4 & 4 & 4 & 4 \\
 &  & q(0.975) & 11 & 11 & 11 & 11 \\
 &  & q(0.995) & 17 & 17 & 16 & 17 \\
 \hline
\multirow{8}{*}{1 to 9} & \multirow{8}{*}{1,823,665} & l1\_mean & 3.332 & 3.333 & 3.224 & 3.332 \\
 &  & q(0.005) & -7 & -7 & -7 & -7 \\
 &  & q(0.025) & -5 & -5 & -5 & -5 \\
 &  & q(0.25) & -1 & -1 & -1 & -1 \\
 &  & q(0.5) & 1 & 1 & 1 & 1 \\
 &  & q(0.75) & 4 & 4 & 4 & 4 \\
 &  & q(0.975) & 12 & 12 & 11 & 12 \\
 &  & q(0.995) & 17 & 17 & 17 & 17 \\
 \hline
\multirow{8}{*}{10 to 99} & \multirow{8}{*}{3,666,266} & l1\_mean & 4.824 & 4.828 & 4.655 & 4.825 \\
 &  & q(0.005) & -17 & -17 & -17 & -17 \\
 &  & q(0.025) & -12 & -12 & -12 & -12 \\
 &  & q(0.25) & -4 & -4 & -4 & -4 \\
 &  & q(0.5) & 0 & 0 & 0 & 0 \\
 &  & q(0.75) & 4 & 4 & 4 & 4 \\
 &  & q(0.975) & 13 & 13 & 13 & 13 \\
 &  & q(0.995) & 19 & 19 & 19 & 19 \\
 \hline
\multirow{8}{*}{100 to 999} & \multirow{8}{*}{707,291} & l1\_mean & 9.153 & 9.158 & 8.851 & 9.177 \\
 &  & q(0.005) & -43 & -43 & -42 & -43 \\
 &  & q(0.025) & -30 & -30 & -29 & -31 \\
 &  & q(0.25) & -12 & -12 & -12 & -12 \\
 &  & q(0.5) & -5 & -5 & -5 & -5 \\
 &  & q(0.75) & 1 & 1 & 1 & 1 \\
 &  & q(0.975)  & 13 & 13 & 12 & 13 \\
 &  & q(0.995) & 20 & 20 & 19 & 20 \\
 \hline
\multirow{8}{*}{1,000+} & \multirow{8}{*}{9,805} & l1\_mean & 27.124 & 27.28 & 26.436 & 27.378 \\
 &  & q(0.005) & -88 & -89 & -83 & -88 \\
 &  & q(0.025) & -71 & -71 & -69 & -72 \\
 &  & q(0.25) & -41 & -41 & -40 & -41 \\
 &  & q(0.5) & -26 & -26 & -25 & -26 \\
 &  & q(0.75) & -8 & -8 & -8 & -8 \\
 &  & q(0.975) & 7 & 8 & 7 & 7 \\
 &  & q(0.995) & 18 & 19 & 17 & 18 \\
 \hline
\end{tabular}

{\small{ Note that q(x) represents the xth quartile.} }
\end{table}

Table \ref{blau_error_table} shows the signed error in the total population query after grouping geographies into quintiles by their the Blau index \citep{blau1977inequality} for each of the five TDA runs. The Blau index is a measure of group heterogeneity and was calculated per geography over eight race-ethnicity levels: Hispanic, non-Hispanic White alone, non-Hispanic Black or African American alone, non-Hispanic American Indian and Alaska Native alone, non-Hispanic Asian alone, non-Hispanic Native Hawaiian and Other Pacific Islander alone, non-Hispanic Some Other Race alone, and non-Hispanic Two or More Races. A larger Blau index indicates more heterogeneity. The Blau index quintile groupings are based on the 2010 CEF. The errors by Blau index quintile were one of the main motivations for developing the multi-pass estimators. Notice the improvement in the two baseline metrics compared with \emph{l2\_onepass} at the county, tract, and block group. The first pass in the multi-pass estimator, as used in the production setting, locks in the total population estimate and is less prone to the side effects of enforcing non-negativity. The small error values associated with the increase privacy-loss budget can also be seen in this table, though the effect appears to be small around the production value of $\rho$.

\begin{table}[H] 
\caption{Signed Count Error in Population Total Query Averaged Across Geographic Units by Geographic Level and Blau Index Quintile}
\label{blau_error_table}
\begin{tabular}{|l|l|rrrr|}
\hline
Geographic Level & Quintile  & baseline1 & baseline2 & inc\_rho & l2\_onepass \\
\hline
\multirow{5}{*}{County}  & 1 & 0.17 & 0.14 & 0.14 & 0.39 \\
 & 2 & 0.01 & 0.06 & -0.09 & 0.04 \\
 & 3 & 0.01 & -0.01 & -0.13 & -0.04 \\
 & 4 & -0.04 & -0.04 & 0.11 & -0.11 \\
 & 5 & -0.14 & -0.15 & -0.03 & -0.28 \\
\hline
\multirow{5}{*}{Tract} & 1 & 0.07 & 0.07 & 0.11 & 0.34 \\
 & 2 & 0.03 & 0.03 & 0.02 & 0.08 \\
 & 3 & -0.02 & -0.02 & -0.01 & -0.02 \\
 & 4 & -0.05 & -0.04 & -0.07 & -0.13 \\
 & 5 & -0.04 & -0.04 & -0.05 & -0.26 \\
\hline
\multirow{5}{*}{Block Group} & 1 & 1.69 & 1.70 & 1.66 & 1.81 \\
 & 2 & 1.55 & 1.55 & 1.53 & 1.65 \\
& 3 & 0.69 & 0.69 & 0.62 & 0.67 \\
 & 4 & -0.90 & -0.90 & -0.77 & -0.89 \\
& 5 & -2.91 & -2.91 & -2.92 & -3.10 \\
\hline
\multirow{4}{*}{Block} & 1 and 2 & 1.55 & 1.55 & 1.50 & 1.55 \\
 & 3 & -0.22 & -0.22 & -0.21 & -0.21 \\
 & 4 & -1.07 & -1.07 & -1.05 & -1.07 \\
 & 5 & -1.67 & -1.67 & -1.62 & -1.67 \\
\hline
\end{tabular}
\end{table}

The utility comparisons in this section have focused on mean absolute error and the distribution of signed errors in the DAS. These comparisons permit assessment of the factors contributing to uncertainty in the published data caused by the disclosure avoidance by comparison to the 2010 Census publications.  We now compare the uncertainty in the production settings of the DAS with other sources of uncertainty in the 2010 Census using \citet{DASteam:uncertainty:2022} and \citet{bell:schafer:2022}, which are the sources for statistics below.  

At the block level the mean absolute error in the total population is 4.8 persons and 90\% of the blocks with housing units or GQ population have errors between -11 and 10. The most populous blocks have asymmetric errors in total population. For example, in blocks with total population above 3,250, the mean absolute error is 22.3 persons and 90\% of the blocks have errors between -80 and 3. The simulation studies exclude the group quarters population for technical reasons, but they still provide interesting comparisons. The first simulation (Schafer) measures ``natural variability in population counts over repeated realizations  of  the  census.'' That is, the simulation holds the ``true population,'' which is latent, constant at the observed 2010 enumeration and simulates repeated censuses with operational and coverage error rates observed in 2010.  In this simulation block level population in 2010 has mean absolute error of 1.5 persons and 90\% of the blocks have errors between -4 and 4. For blocks with populations of 1,000 or more, the mean absolute error is 14 persons and 90\% of the blocks have errors between -31 and 30. The second simulation (Bell) measures variation in the ``coverage error,'' the difference between the dual system estimate (DSE) of the true population and the 2010 Census enumerated population using models for the components of the DSE. The mean absolute error in the block population from this simulation is 5.4 persons and 90\% of the blocks have errors between -11 and 12. At the block level, the DAS contributes uncertainty to population counts that appears to be comparable to the uncertainty contributed by census operational, measurement and coverage errors.

The picture is very different for more aggregated geographies. For all counties, the mean absolute error in the DAS is 1.75 and 90\% of counties have errors between -4 and 4. In the Schafer simulation, the mean absolute error in the county population is 117 and 90\%  of counties have errors between -248 and 230. In the Bell simulation, the mean absolute county population error is 964 and 90\% of counties have errors between -1,841 and 2,048. The uncertainty in county populations from the DAS is trivial compared to the estimated uncertainty contributed by operational, measurement and coverage errors. The explanation for this result is straightforward. The DAS was designed to control the error from disclosure limitation at all levels of geography, whereas operational, measurement and coverage errors largely accumulate as the population in the geographic area increases. For this reason, block-level uncertainty caused by the DAS can provide confidentiality protection of the respondent's location without contributing materially to the uncertainty in populations of larger geographic units like MCDs, towns, cities, American Indian Reservations, and counties.

\section{Conclusion}
\label{sec:conclusion}
The development of the TopDown Algorithm as the core disclosure avoidance technology for the publications from the 2020 Census began in 2016. Initial work focused on implementing pure differential privacy using the Laplace mechanism. This preliminary work culminated in the release, in April 2019, of the  Redistricting Data (P.L. 94-171) Summary File from the 2018 End-to-End Census Test. The initial implementation demonstrated feasibility. Since the use case for the End-to-End Test data product was limited to the correctness of the dissemination format and dictionary, this initial release received little public scrutiny. 

Beginning with the October 2019 release of the first DAS demonstration data product, the \emph{Harvard Data Science Review} Symposium that same month, and the Committee on National Statistics (CNSTAT) expert workshop in December 2019, there was intensive internal and public scrutiny of the 2020 Census DAS and TDA. The dialogue continued through the May 2020 demonstration data product release, which was prepared just as the Census Bureau announced the suspension of 2020 Census operations due to the COVID-19 pandemic. As the operations of the 2020 Census resumed in July 2020, many adjustments of the operational and publication timetables occurred. The development of the DAS and the refinement of TDA were limited to the Redistricting Data Summary File in an effort to accelerate the production of those data in light of the collection delays that the pandemic caused. The September 2020, November 2020, and April 2021 demonstration data products were limited to the tables in the redistricting data. These products documented the successive refinements of TDA. 

During the one-month public comment period following the April 2021 demonstration data release, the Census Bureau received 48 detailed comments as well as extensive recommendations from the Census Scientific Advisory Committee and the National Advisory Committee. Those recommendation along with the recommendations from all internal stakeholders within the Census Bureau guided the deliberations of the Data Stewardship Executive Policy Committee. DSEP met for a total of 4.5 hours from June 3 through June 8 to assess the privacy guarantees and accuracy of the TDA as applied to the 2010 Census Edited File. DSEP's instructions were incorporated into the final production parameters for the 2020 Census implementation.

On June 25, 2021, the production DAS, implementing TDA as DSEP instructed, started. On June 26, the 2020 Microdata Detail File for the redistricting data was released to the Demographic Programs Directorate for quality assessment. On July 17, the MDF passed the full data quality assessment and was delivered to the 2020 Census tabulation system for publication. On August 12, 2021 the Census Bureau released the 2020 Census Redistricting Data  (P.L. 94-171) Summary File \citep{pl94:2020} and the final redistricting data demonstration product based on the same production code and parameters but using the 2010 Census data \citep{2020_das_development}. In September 2021, the full production code base for the redistricting data was also released including all parameter values and supporting tables \citep{pl94:2020:codebase}.  

Further data products, especially the Demographic and Housing Characteristics File, must now be processed through the DAS. Tuning is already underway for DHC. The Census Bureau expects to announce the schedule for DHC demonstration products and official publication in early 2022.


\appendix
\section{The Block-by-block Algorithm}
\label{subsec:block_by_block}

A simple and straightforward method of creating a differentially private estimate of $\mathbf{x}$ is to work directly with the leaves of the 
geographic spine (a directed, rooted tree), which are the census blocks.
We call this method the \emph{Block-by-block Algorithm}. Table \ref{tab:block_by_block_summary} presents a schematic diagram of this algorithm.

The first phase of the block-by-block algorithm is to take differentially private measurements of key queries for each block using the block-level contingency table vectors. The second phase is to, for each block, combine the information from the differentially private measurements to generate a non-negative integer estimate of each block's contingency table vector. While there are many estimators for this second phase, one method that we deployed to give non-negative integer solutions involves first finding a non-negative least squares solution and then using a controlled rounding method to produce the integer solutions for each block-level statistic.

The block-by-block algorithm's estimates of the contingency table vector at higher geographic levels are the aggregates of the block-level contingency table vector estimates. This method can be used in parallel across the millions of blocks and, since blocks are disjoint, the entire $\rho$ value can be applied to each block's measurements.\footnote{We make an appropriate sensitivity adjustment because we assumed that the total number of records in the database is fixed.} For example, if we take a set of $\rho=1$ zCDP measurements for each block, the total $\rho$ budget would still be 1.

The block-by-block algorithm has two major limitations. First, it is difficult to control the accuracy of estimated queries above the block level. This is because aggregate geography estimates are simply sums of the the block-level estimates. The differentially private random error of the block-level measurements are independent of one another. Therefore, the 
squared error
of the aggregated estimates is at least as large as the summation of the block-level variances for all the blocks used in a particular aggregation.\footnote{Nonnegativity constraints 
further introduce bias.
This is particularly important when summing over block-level estimates because they have the sparsest histograms.} For example, if it was important to finely tune the accuracy of the nationwide race-ethnicity totals, this could not be done directly. It would only be possible through adjusting the accuracy of the block-level race-ethnicity totals, which would require extremely tight block-level estimates since 5,892,698 blocks can be inhabited in the 2020 Census (see Table \ref{table:geo_entity_counts}).  The second limitation of the block-by-block algorithm is the inability to enforce constraints at higher levels of geography without also enforcing them at the block level (and intermediary levels). For example, imagine that the state-level total populations were designated as invariant (but not sub-geographies). The way the block-by-block algorithm would enforce those constraints would also imply that the total populations at the block, block group, tract, and county levels must also be invariant.  

\begin{table}[]
    \centering
        \caption{The Block-by-block Algorithm Summary}
        \label{tab:block_by_block_summary}
    \begin{tabular}{|p{14.5cm}|}
        \hline
        \emph{Measurement Phase}\\
         \hline
         \begin{tabular}{l}
            (1) Using the total $\rho$ budget, take differentially private noisy measurements $\tilde{M}_{\gamma}$\\
            for each node $\gamma$ at the block level.\\
         \end{tabular}
         \\
         \hline
         \emph{Estimation Phase}\\
         \hline
         \begin{tabular}{l}
              (1) For each block node $\gamma$, estimate the contingency table vector $\mathbf{x}_{\gamma}$ by\\
              \hspace{0.5cm} (a) Estimating a non-negative solution $\tilde{\mathbf{x}}_{\gamma}$ from differentially private \\
              \hspace{0.5cm} noisy measurements $\tilde{M}_{\gamma}$ and constraints;\\ 
              \hspace{0.5cm} (b) Estimating a non-negative integer solution $\hat{\mathbf{x}}_{\gamma}$ from $\tilde{\mathbf{x}}_{\gamma}$ by controlled rounding.\\
         \end{tabular}
    \\
    \hline
    \end{tabular}
\end{table}

\subsection{Block-by-block versus TDA comparison}
Consider a simple illustrative example of the block-by-block mechanism and how it compares to the TDA in terms of utility. Suppose that the only attribute is total population from the block level to the national level. Suppose also, for the sake of this example, that there are no invariants and negative estimates are permitted. Using a total $\rho$-budget of $\frac{1}{2}$, we could use the continuous Gaussian mechanism in the block-by-block algorithm to generate measurements of the block-level total populations with variance $\sigma^2=1$ for each of the $5,892,698$ blocks for the 2020 Census. This would produce extremely accurate differentially private data at the block level, but what are the implications for aggregations of blocks?

Block groups, tracts, counties, states, and the US are all aggregations of blocks. Since the differentially private measurements add independent noise to each block population, the variance of estimates of the aggregates is the sum of the block-level variances. For the US, the estimated total population is unbiased, but, in this most extreme case, the variance of the estimated total population is $5,892,698$. Clearly, this is an unacceptably large mean squared error.  Now consider the comparable TDA-inspired approach. Dividing the $\rho$-budget of $\frac{1}{2}$ into six equal parts of $\frac{1}{12}$ each for the US, state, county, tract, block group, and block levels and taking a differentially private measurement of the total population. Again, all the estimated populations are unbiased; however, allocating $\rho$ across all levels of the hierarchy results in a variance of $\sigma^2=6$ for each measurements. Compared with the block-by-block variance for the total population at the national level, there is a huge mean squared error improvement. At lower geographic levels except for blocks and block groups with fewer than six blocks, the measurements generated from the TDA-inspired approach will also be more precise than block-by-block even without borrowing strength between estimates at different levels of the hierarchy. Early experimental implementations of TDA using public 1940 Census data have shown that TDA produced more accurate estimates even at the lowest-geographic level (enumeration district) than the block-by-block algorithm \citep{AbowdCSAC2018}.

\section{The TDA Satisfies $(\epsilon,\delta)$ Approximate Differential Privacy}
\label{sec:approx_dp_proof}

As discussed in Section \ref{subsec:basic_defs}, a mechanism that is $\rho$-zCDP also satisfies $(\epsilon,\delta)$--approximate differential privacy. When converting $\rho$ to $(\epsilon, \delta)$, we intentionally did not use the tightest conversion \citep{NEURIPS2020_b53b3a3d,canonne2020discrete}, but instead chose to use a conversion \citep{10.1007/978-3-662-53641-4_24} that overestimates $\delta$ and is interpretable as a upper bound on the probability that a particular level of $\epsilon$ does not hold \citep{asoodeh:etal:2020:9b52f4c81b6f4c50a766fe9675b81066}. This appendix shows how we derive the correct $\rho$ values for the individual noisy queries given a target overall $\rho$ privacy-loss budget. That overall $\rho$ value is then converted to an $\epsilon$ value given $\delta = 10^{-10}$.

Consider the univariate queries in the TDA defined as $q_{ijkl}$ where $i$ specifies the geographic level (US, state, etc.), $j$ specifies the node within the level $i$, $k$ indexes the marginal query set (e.g., CENRACE) within the node, and $l$ indexes the individual cells of that marginal query set (within $i,j,k$). 

The query strategy for the TDA consists of two parts: which queries are asked and what privacy-loss budget should be used for each. In early versions of the TDA, the query strategy was symmetric over geographic levels and nodes meaning that the same set of queries would be asked for every geography and level; however, we have since generalized the strategy to allow for different query sets to be asked for different levels. We require that the same set of queries be asked for every node within a level. Similar to the generalization of which queries are asked for each level, we generalize the privacy-loss budget allocations to allow for differences between and within levels.    

We denote the relative precision of a query by $c_{ij} \times d_{ijk}$ where $c_{ij} \in (0,1]$ gives the $\rho$ proportion values for a specific node within a geographic level and $d_{ijk} \in (0,1]$ the $\rho$ proportion for the $k$th query within the specific node. Given the values $\{c_{ij}\}$ and $\{d_{ijk}\}$, the goal is to find a parameter $(\psi)$, which we call the \emph{global scale}, such that asking noisy queries with the discrete Gaussian mechanism with parameter
$$
\sigma_{ijkl}^2 = \left( \frac{\psi^2}{c_{ij} d_{ijk}}  \right)
$$
for queries $\{q_{ijkl}\}$ results in $\rho$-zCDP.  

We impose the following requirements for
the structure of $\{c_{ij}\}$ and $\{d_{ijk}\}$.  First, we require that for all $i,j$, $\sum_{k = 1, \cdots K_{ij} } d_{ijk} = 1$. That is within a level and node, the relative split of the precision across queries always totals to 1. Second, let $P(1, Ij)$ represent the indices of the directed path between the root level (US) and block $j$. Then, we require that $\sum_{ij \in P(1, Ij)} c_{ij} = 1, \forall\, j$, meaning that if we add up the proportion values $c_{ij}$ for a given block and all nodes it is nested within, up to the US, we get 1.  

\begin{theorem} \label{gen_thrm}
The set of queries $\left\{q_{ijkl}\right\}$
processed by the discrete Gaussian mechanism with parameters $\sigma_{ijkl}^2 = \left( \frac{\psi^2}{c_{ij} d_{ijk}} \right)$
where
\begin{enumerate}
    \item All queries used by TDA are marginals
    \item $c_{ij} \ge 0; d_{ijk} \ge 0 \; \forall \; i,j,k$ \footnote{Functionally, if $c_{ij}$ or $d_{ijk}$ is $0$, then the query is not asked.}
    \item For all $i,j$ $\sum_{k = 1, \cdots, K_{ij} } d_{ijk} = 1$
    \item $\sum_{ij \in P(1, Ij)} c_{ij} = 1, \forall\, j$
\end{enumerate}

results in $\rho$--zCDP with 
$\rho = \frac{1}{\psi^2}$.
\end{theorem}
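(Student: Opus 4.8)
The plan is to treat the entire measurement phase as a single instance of the multivariate discrete Gaussian mechanism and invoke Theorem~\ref{thrm:mdg_cdp} directly, rather than composing many smaller mechanisms. Stack all the answers $q_{ijkl}$ into one query vector $q$ and all the independent noise terms $y_{ijkl}\sim\mathcal{N}_{\mathbb{Z}}(0,\sigma_{ijkl}^2)$ into one vector $y$, so that the released measurements are exactly $M(x)=q(x)+y$. By Theorem~\ref{thrm:mdg_cdp} with the target $\rho=1/\psi^2$, it suffices to show that the rescaled query $q'=(q_{ijkl}/\sigma_{ijkl})$ has $L_2$ sensitivity at most $\sqrt{2\rho}=\sqrt{2}/\psi$; equivalently, that for every neighboring pair $x\sim x'$ the sum of squared rescaled differences is at most $2/\psi^2$.

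First I would fix a neighboring pair $x\sim x'$ differing in a single record and identify which query-group answers move. Because every marginal query group is exhaustive, mutually exclusive, and binary (assumption 1, together with the sensitivity discussion in Section~\ref{subsec:queries}), modifying one record inside a geographic node changes that node's marginal answer in at most two cells, so the squared change of any single group at that node is at most $2$; a node not containing the record is unaffected. Since $\sigma_{ijkl}^2=\psi^2/(c_{ij}d_{ijk})$ depends only on $(i,j,k)$, the squared rescaled contribution of group $k$ at node $(i,j)$ equals $(c_{ij}d_{ijk}/\psi^2)\sum_l(q_{ijkl}(x)-q_{ijkl}(x'))^2$. Summing over the groups $k$ at a fixed node and using $\sum_k d_{ijk}=1$ (assumption 3) collapses the group index and leaves a per-node bound of $2c_{ij}/\psi^2$. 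Summing these over the geographic nodes that actually contain the changed record, and invoking the path condition $\sum_{ij\in P(1,Ij)}c_{ij}=1$ (assumption 4), telescopes the node sum to exactly $2/\psi^2$, which is the required bound.

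The step I expect to be the main obstacle is verifying that the telescoping survives the worst case in which the modified record also changes geographic location, so that two root-to-leaf paths are involved rather than one. Here I would let $\gamma^\star$ be the lowest common ancestor of the old and new blocks: nodes strictly below $\gamma^\star$ on either path see the record only leave or only enter, contributing a squared change of at most $1$ per group rather than $2$, while nodes at or above $\gamma^\star$ retain the record and contribute at most $2$. Writing $S_1,S_2,S_c$ for the sums of $c_{ij}$ along the two lower path-segments and the shared upper segment, assumption 4 applied to each leaf gives $S_1+S_c=S_2+S_c=1$, hence $S_1=S_2=1-S_c$ and the total rescaled squared sensitivity is $(S_1+S_2+2S_c)/\psi^2=2/\psi^2$, matching the attribute-only case exactly. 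With the sensitivity bound established in both cases, Theorem~\ref{thrm:mdg_cdp} yields $\rho$--zCDP with $\rho=1/\psi^2$.
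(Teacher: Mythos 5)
Your proposal is correct and follows essentially the same route as the paper's proof: both invoke Theorem \ref{thrm:mdg_cdp} on the full stacked query, bound each affected node's per-group squared change by $2$ (or by $1$ for nodes below the point where the old and new geographic paths diverge), and collapse the sums using assumptions 3 and 4. Your lowest-common-ancestor $\gamma^\star$ plays exactly the role of the level $S$ in the paper's Case 2, and the final accounting $S_1+S_2+2S_c = (S_1+S_c)+(S_2+S_c)=2$ is the same telescoping the paper performs.
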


\begin{proof}
Applying Theorem 14 of \cite{canonne2020discrete} to queries $\{q_{ijkl}\}$ we want to find $\rho$ such that 
$$ 
\frac{1}{\psi^2} \sum_{i} \sum_{j} \sum_{k} \sum_{l} (c_{ij}d_{ijk}) (q_{ijkl}(x) - q_{ijkl}(x'))^2 \le 2 \rho.
$$
In the case of the TDA, it is easiest to split the proof into two cases. The first being that the change from $x$ to $x'$ does not affect the block count totals (that is, the record change does not change the given block). The second being that the change from $x$ to $x'$ does affect the block count totals (the record change does change the given block).


\noindent Case 1:\\
The change from $x$ to $x'$ will affect the queries in exactly one node of each geographic level because the record change did not change the geography of the record. Let $j^*_i$ be the node in level $i$ that is affected. Any queries outside of these nodes will have values of 0 for squared differences in the queries for $x$ and $x'$. Then, 

\begin{align*}
&\frac{1}{\psi^2} \sum_{i} \sum_{j} \sum_{k} \sum_{l} (c_{ij}d_{ijk}) (q_{ijkl}(x) - q_{ijkl}(x'))^2  \\
&=\frac{1}{\psi^2} \sum_{i}  \sum_{k} \sum_{l} (c_{ij^*_i}d_{ij^*_ik}) (q_{ij^*_ikl}(x) - q_{ij^*_ikl}(x'))^2  \\
&=\frac{1}{\psi^2} \sum_{i} c_{ij^*_i}  \sum_{k} (d_{ij^*_ik}) \sum_{l}  (q_{ij^*_ikl}(x) - q_{ij^*_ikl}(x'))^2 \\
& \le \frac{1}{\psi^2} \sum_{i} c_{ij^*_i}  \sum_{k} (d_{ij^*_ik}) (1^2 + (-1)^2)\\
&\text{[Since all queries are marginals, $q_{i j^*_i k l}(x)$ and $q_{i j^*_i k l}(x^\prime)$} \\
&\text{differ in at most 2 values of $l$] } \\
&=  \frac{1}{\psi^2} \sum_{i} c_{ij^*_i}  (1) (2)\\
&\text{[Since  $\forall \; i,j$ $\sum_{k = 1, \cdots, K_{ij} } d_{ijk} = 1$] } \\
&= \frac{2}{\psi^2} \\
&\text{[Since $\sum_{ij \in P(1, Ij)} c_{ij} = 1, \forall\, j$]. } 
\end{align*}
Therefore, $\rho = \frac{1}{\psi^2}$.\\

\noindent Case 2:\\
Using $i=I$ to denote the block geographic level (leaves of the hierarchy) and without loss of generality, let $j^*_I$ and $j^{**}_I$ be indices for the block nodes that were affected by the change from $x$ to $x'$, respectively.
By design, we know that $j^*_I \ne j^{**}_I$. Consider the generalization of this change such that $j^*_i$ is the node at level $i$ of the unit before the change from $x$ to $x'$ and $j^{**}_i$ is the node at level $i$ after the change. By definition, at the US level, $j^{*}_1 = j^{**}_1$, and it is possible that $j^{*}_i = j^{**}_i$ for other levels as well. For example, consider a change from $x$ to $x'$ that changes the block of a record but in a way such that the county is the same. Because of the tree hierarchy of geographic levels, if $j^{*}_i = j^{**}_i$, then $j^{*}_{i-1} = j^{**}_{i-1}$. Therefore, for all possible $x$ to $x'$ changes in our case, there exists $S \in \{1,\cdots,I-1\}$ such that for $j^{*}_i = j^{**}_i$ for all $i \le S$. Therefore,          

\begin{align*}
&\frac{1}{\psi^2} \sum_{i} \sum_{j} \sum_{k} \sum_{l} (c_{ij}d_{ijk}) (q_{ijkl}(x) - q_{ijkl}(x'))^2 \\
&= \frac{1}{\psi^2} \sum_{i=1}^{S} \sum_{j} \sum_{k} \sum_{l} (c_{ij}d_{ijk}) (q_{ijkl}(x) - q_{ijkl}(x'))^2 \\
&+ \frac{1}{\psi^2} \sum_{i=S+1}^{I}   \sum_{j} \sum_{k} \sum_{l} (c_{ij}d_{ijk}) (q_{ijkl}(x) - q_{ijkl}(x'))^2.
\end{align*}

Now consider the first part of the summation. Similar to Case 1, the change from $x$ to $x'$ doesn't change the geography of the record at the given level, only at most the cell. Therefore, at most two cells will be changed giving 
\begin{align*}
&\frac{1}{\psi^2} \sum_{i=1}^{S} \sum_{j} \sum_{k} \sum_{l} (c_{ij}d_{ijk}) (q_{ijkl}(x) - q_{ijkl}(x'))^2 \\
&= \frac{1}{\psi^2} \sum_{i=1}^{S}  \sum_{k} \sum_{l} (c_{ij^{*}_i}d_{ij^{*}_ik}) (q_{ij^{*}_ikl}(x) - q_{ij^{*}_ikl}(x'))^2 \\
&= \frac{1}{\psi^2} \sum_{i=1}^{S}  c_{ij^{*}_i} \sum_{k} \sum_{l} (d_{ij^{*}_ik}) (q_{ij^{*}_ikl}(x) - q_{ij^{*}_ikl}(x'))^2 \\
&\le \frac{1}{\psi^2} \sum_{i=1}^{S}  c_{ij^{*}_i} \sum_{k} (d_{ij^{*}_ik}) (1^2 +(-1)^2) \\
&= \frac{2}{\psi^2} \sum_{i=1}^{S}  c_{ij^{*}_i}  \\
&\text{[Since  $\forall \; i,j$ $\sum_{k = 1, \cdots K_{ij} } d_{ijk} = 1$] .}
\end{align*}

Now consider the second half of the summation. We know that for the affected nodes $j^*_i$ and $j^{**}_i, i > S$ all query marginals will be affected by exactly +1 or exactly -1 respectively. The queries for all other nodes will be unchanged. Then,   

\begin{align*}
&\frac{1}{\psi^2} \sum_{i=S+1}^{I}   \sum_{j} \sum_{k} \sum_{l} (c_{ij}d_{ijk}) (q_{ijkl}(x) - q_{ijkl}(x'))^2 \\
&= 0+ \frac{1}{\psi^2} \sum_{i=S+1}^{I}    c_{ij^{*}_{i}} \sum_{k} d_{ij^{*}_{i}k} \sum_{l} (q_{ij^*_{i}kl}(x) - q_{ij^*_{i}kl}(x'))^2 \\
&+ \frac{1}{\psi^2} \sum_{i=S+1}^{I}    c_{ij^{**}_{i}} \sum_{k} d_{ij^{**}_{i}k} \sum_{l}  (q_{ij^{**}_{i}kl}(x) - q_{ij^{**}_{i}kl}(x'))^2\\
& \le \frac{1}{\psi^2} \sum_{i=S+1}^{I}    c_{ij^{*}_{i}} \sum_{k} d_{ij^{*}_{i}k} (-1)^2
+ \frac{1}{\psi^2} \sum_{i=S+1}^{I}    c_{ij^{**}_{i}} \sum_{k} d_{ij^{**}_{i}k} (1)^2\\
&= \frac{1}{\psi^2} \sum_{i=S+1}^{I}    c_{ij^{*}_{i}}  
+ \frac{1}{\psi^2} \sum_{i=S+1}^{I}    c_{ij^{**}_{i}} \\
&\text{[Since  $\forall \; i,j$ $\sum_{k = 1, \cdots K_{ij} } d_{ijk} = 1$] .}
\end{align*}

Pairing this with the earlier result yields 

\begin{align*}
&\frac{1}{\psi^2} \sum_{i} \sum_{j} \sum_{k} \sum_{l} (c_{ij}d_{ijk}) (q_{ijkl}(x) - q_{ijkl}(x'))^2 \\
&\le  \frac{2}{\psi^2} \sum_{i=1}^{S}  c_{ij^{*}_i} + \frac{1}{\psi^2} \sum_{i=S+1}^{I}    c_{ij^{*}_{i}}  
+ \frac{1}{\psi^2} \sum_{i=S+1}^{I}    c_{ij^{**}_{i}} \\
&=  \frac{1}{\psi^2} \left(\sum_{i=1}^{S}  c_{ij^{*}_i} + \sum_{i=S+1}^{I}    c_{ij^{*}_{i}} \right) 
+ \frac{1}{\psi^2} \left( \sum_{i=1}^{S}  c_{ij^{*}_i} + \sum_{i=S+1}^{I} c_{ij^{**}_{i}} \right) \\
&=  \frac{1}{\psi^2} \left(\sum_{i=1}^{I}    c_{ij^{*}_{i}} \right) 
+ \frac{1}{\psi^2} \left(\sum_{i=1}^{I} c_{ij^{**}_{i}} \right) \\
& = \frac{2}{\psi^2} \\
&\text{[Since $\sum_{ij \in P(1, Ij)} c_{ij} = 1, \forall\, j$  ].}
\end{align*}

\noindent Therefore, $\rho = \frac{1}{\psi^2}$.
\end{proof}

\printbibliography

\end{document}